\newsavebox{\mybox}
\newenvironment{dsproblem}[1]
{\smallskip\centering\begin{lrbox}{\mybox}\begin{minipage}{0.977\columnwidth}#1\\}
{\end{minipage}\end{lrbox}\fbox{\usebox{\mybox}}\smallskip}
  \newcommand{\defproblem}[3]{\begin{dsproblem}{#1}\textbf{Input:} #2\\\textbf{Output:} #3\end{dsproblem}}
\newcommand{\F}{\mathcal{F}}
\newcommand{\MAW}{\textsc{MAW-SequenceComparison}}
\newcommand{\CMAW}{\textsc{MAW-CircularSequenceComparison}}
\newcommand{\QGR}{\textsc{MAW-Qgrams}}
\def\dd{\mathinner{.\,.}}
\newcommand{\cO}{\mathcal{O}}
\newcommand{\SA}{\textsf{SA}}
\newcommand{\iSA}{\textsf{iSA}}
\newcommand{\LCP}{\textsf{LCP}}
\newcommand{\LCE}{\textsf{LCE}}
\newcommand{\lcp}{\textsf{lcp}}
\newcommand{\mina}{\mathcal{M}}
\newcommand{\LW}{\textsf{LW}}
 \newtheorem{theorem}{Theorem}
 \newtheorem{fact}[theorem]{Fact}
 \newtheorem{lemma}[theorem]{Lemma}
 \newtheorem{example}[theorem]{Example}
 \newtheorem{remark}[theorem]{Remark}
\begin{document}

\begin{frontmatter}

\title{Alignment-free sequence comparison using absent words\tnoteref{note1}}
\tnotetext[note1]{A preliminary version of this paper, without the first author, was presented at the 12th Latin American Theoretical Informatics Symposium (LATIN 2016)~\cite{Crochemore2016}.}
\author[a]{Panagiotis Charalampopoulos}
\ead{panagiotis.charalampopoulos@kcl.ac.uk}

\author[a]{Maxime Crochemore}
\ead{maxime.crochemore@kcl.ac.uk}

\author[b]{Gabriele Fici}
\ead{gabriele.fici@unipa.it}

\author[c]{Robert Merca\c{s}}
\ead{R.G.Mercas@lboro.ac.uk}

\author[a]{Solon P.\ Pissis}
\ead{solon.pissis@kcl.ac.uk}

\address[a]{Department of Informatics, King's College London, London, UK\\}
\address[b]{Dipartimento di Matematica e Informatica, Universit{\`a} di Palermo, Palermo, Italy\\}
\address[c]{Department of Computer Science, Loughborough University, UK\\}

\begin{abstract}
Sequence comparison is a prerequisite to virtually all comparative genomic analyses.
It is often realised by sequence alignment techniques, which are computationally expensive.
This has led to increased research into alignment-free techniques, which are based on measures referring to the composition
of sequences in terms of their constituent patterns. These measures, such as $q$-gram distance, are usually computed in time linear 
with respect to the length of the sequences. 
In this paper, we focus on the complementary idea: how two sequences can be efficiently compared based
on information that does not occur in the sequences.  
A word is an {\em absent word} of some sequence if it does not occur in the sequence.
An absent word is {\em minimal} if all its proper factors occur in the sequence.
Here we present the first linear-time and linear-space algorithm 
to compare two sequences by considering {\em all} their minimal absent words.
In the process, we present results of combinatorial interest, and also extend the proposed techniques to compare circular sequences.
We also present an algorithm that, given a word $x$ of length $n$, computes the largest integer for which all factors of $x$ of that length occur in some minimal absent word of $x$ in time and space $\cO(n)$. Finally, we show that the known asymptotic upper bound on the number of minimal absent words of a word is tight.
\end{abstract}

\begin{keyword}
%% keywords here, in the form: keyword \sep keyword
sequence comparison \sep absent words \sep forbidden words \sep circular words \sep q-grams

%% MSC codes here, in the form: \MSC code \sep code
%% or \MSC[2008] code \sep code (2000 is the default)

\end{keyword}

\end{frontmatter}

%%%%%%%%%%%%%%%%%%%%%%%%%%%%%%%%%%%%%%%%%%%%%%%%%%%%%%%%%%%%%%%%%%%%%%%%%%%%%%%%%%%%%%%%%%%%%%%%%%%%%%
\section{Introduction}
%%%%%%%%%%%%%%%%%%%%%%%%%%%%%%%%%%%%%%%%%%%%%%%%%%%%%%%%%%%%%%%%%%%%%%%%%%%%%%%%%%%%%%%%%%%%%%%%%%%%%%

Sequence comparison is an important step in many basic tasks in bioinformatics, from phylogeny reconstruction to genome assembly. 
It is often realised by sequence alignment techniques, which are computationally expensive, often requiring quadratic time in the length of the sequences. 
This has led to increased research into \textit{alignment-free} techniques~\cite{Vinga01032003}. 
Hence standard notions for sequence comparison are gradually being complemented and in some cases replaced by alternative ones~\cite{Domazet-Loso:2009:EEP:1671627.1671629,WABI2015}. 
One such notion is based on comparing the words that are absent in each sequence~\cite{nullrly}. 
A word is an \textit{absent word} (or a forbidden word) of some sequence if it does not occur in the sequence. Absent words represent a type of \textit{negative information}: information about what does not occur in the sequence. 

Given a sequence of length $n$, the number of absent words of length at most $n$ is exponential in $n$. However, the number of certain classes of absent words is only linear in $n$.
This is the case for \textit{minimal absent words}, that is, absent words in the sequence for which all proper factors occur in the sequence~\cite{BeMiReSc00}. 
An upper bound on the number of minimal absent words of a word of length $n$ over an alphabet $\Sigma$ of size $\sigma$ is known to be $\sigma n$~\cite{Crochemore98automataand,Mignosi02}.
Hence it may be possible to compare sequences in time proportional to their lengths, for a fixed-sized alphabet, instead of proportional to the product of their lengths. 
In what follows, we mainly consider sequences over a {\em fixed-sized alphabet} since the most commonly studied alphabet in this context is $\{\texttt{A,C,G,T}\}$.

An $\cO(n)$-time and $\cO(n)$-space algorithm for computing all minimal absent words of a sequence of length $n$ over a fixed-sized alphabet based on the construction of suffix automata 
was presented in~\cite{Crochemore98automataand}.  
The computation of minimal absent words based on the construction of suffix arrays was considered in~\cite{Pinho2009}; although this algorithm has a linear-time performance in practice, 
the worst-case time complexity is $\cO(n^2)$. New $\cO(n)$-time and $\cO(n)$-space suffix-array-based algorithms were presented in~\cite{DBLP:conf/isit/FukaeOM12,MAW,PPAM2015} to bridge this unpleasant gap. 
An implementation of the algorithm presented in~\cite{MAW} is currently, to the best of our knowledge, the fastest available for the computation of minimal absent words.
A more space-efficient solution to compute all minimal absent words in time $\cO(n)$ was also presented in~\cite{Belazzougui2013} and an external-memory algorithm in~\cite{em-maw}.

In this paper, we consider the problem of comparing two sequences $x$ and $y$ of 
respective lengths $m$ and $n$, using their sets of minimal absent words.
In~\cite{Chairungsee2012109}, Chairungsee and Crochemore 
introduced a measure of similarity between two sequences based on the notion of minimal absent words. 
They made use of a length-weighted index to provide a measure of similarity between two sequences, using sample sets of their minimal absent words, by considering the length of
each member in the symmetric difference of these sample sets. This measure can be trivially computed in time and space $\cO(m + n)$ provided that these sample sets contain
minimal absent words of some bounded length $\ell$. For unbounded length, the same measure can be trivially computed in time $\cO(m^2 + n^2)$: for a given sequence, 
the cumulative length of all its minimal absent words can grow {\em quadratically} with respect to the length of the sequence. This length-weighted index forms the basis of a fundamentally new, recently introduced, algorithm for on-line pattern matching~\cite{pattern_matching}.

The same problem can be considered for two {\em circular} sequences. The measure of similarity of  Chairungsee and Crochemore can be used in this setting provided that one extends the 
definition of minimal absent words to circular sequences. In Section~\ref{sec:circ_seq_comp}, we give a definition of minimal absent words for a circular sequence from the Formal Language 
Theory point of view. We believe that this definition may also be of interest from the point of view of Symbolic Dynamics, which is the original context in which minimal absent words have been introduced~\cite{BeMiReSc00}.

We also find a connection between the information provided by minimal absent words and the information provided by the set of $q$-grams. The former (absent words) can be seen as some kind of \emph{negative} information, while the latter ($q$-grams) as \emph{positive} information.

\medskip
\noindent \textbf{Our contributions.} Here we make the following contributions:%
\begin{enumerate}
  \item[a)] We first show that the upper bound $\cO(\sigma n)$ on the number of minimal absent words of a word of length $n$ over an alphabet of size $\sigma$ is tight if $2 \leq \sigma \leq n$ (Section~\ref{sec:tightmaw}).
 \item[b)] We present an $\cO(m + n)$-time and $\cO(m + n)$-space algorithm to compute the similarity measure introduced by Chairungsee and Crochemore by considering {\em all} minimal absent words of two  sequences $x$ and $y$ of lengths $m$ and $n$, respectively, over a fixed-sized alphabet; thereby
 showing that it is indeed possible to compare two sequences in time proportional to their lengths (Section~\ref{sec:seq_comp}).
 \item[c)]  We show how this algorithm can be applied to compute this similarity measure for two circular sequences $x$ and $y$ of lengths $m$ and $n$, respectively, in the same time and space complexity as a result of the extension of the definition of minimal absent words to circular sequences (Section~\ref{sec:circ_seq_comp}).
  \item[d)] We then present an $\cO(n)$-time and $\cO(n)$-space algorithm that given a word $x$ of length $n$ over an integer alphabet computes the largest integer $q(x)$ for which each $q(x)$-gram of $x$ is a $q(x)$-gram of some minimal absent word of $x$ (Section~\ref{sec:q-grams}).
   \item[e)]  Finally, we provide an open-source code implementation of our algorithms for sequence comparison using minimal absent words and investigate potential applications of our theoretical findings (Section~\ref{sec:imp_app}).
\end{enumerate}

%%%%%%%%%%%%%%%%%%%%%%%%%%%%%%%%%%%%%%%%%%%%%%%%%%%%%%%%%%%%%%%%%%%%%%%%%%%%%%%%%%%%%%%%%%%%%%%%%%%%%%
\section{Preliminaries}\label{sec:prem}
%%%%%%%%%%%%%%%%%%%%%%%%%%%%%%%%%%%%%%%%%%%%%%%%%%%%%%%%%%%%%%%%%%%%%%%%%%%%%%%%%%%%%%%%%%%%%%%%%%%%%%

  We begin with basic definitions and notation.
  Let $y=y[0]y[1]\dd y[n-1]$ be a \textit{word} (or \textit{string}) of length $|y|=n$
over a finite ordered alphabet $\Sigma$ of size 
$|\Sigma|=\sigma=\cO(1)$. We also consider the case of words over an {\em integer alphabet}, where each letter is replaced by its rank in such a way that the resulting word consists of integers in the range $\{1,\ldots,n\}$. 

For two positions $i$ and $j$ on $y$, we denote by $y[i\dd j]=y[i]\dd y[j]$ the \textit{factor} 
(sometimes called \textit{substring}) of $y$ that 
starts at position $i$ and ends at position $j$ (it is of length $0$ if $j<i$), and by $\varepsilon$ 
the \textit{empty word}, word of length 0. 
  We recall that a prefix of $y$ is a factor that starts at position 0 
($y[0\dd j]$), a suffix is a factor that ends at position $n-1$ 
($y[i\dd n-1]$), and that a factor of $y$ is a \textit{proper} factor if 
it is not $y$ itself. A proper factor of $y$ that is neither a prefix nor a suffix of $y$ is called an {\em infix} of $y$. The set of all  factors of the word $y$ is denoted by $\F_y$. 
Any factor of length $q \geq 1$ of $y$ is called a {\em $q$-gram} (or {\em $q$-mer}) of $y$.
The {\em $q$-gram set} of $y$ is the set of all factors of length $q$ of $y$. 
We denote the reverse word of $y$ by $\textsf{rev}(y)$, i.e. $\textsf{rev}(y)=y[n-1]y[n-2]\ldots y[1]y[0]$. We say that a word $x$ is \emph{a power} of a word $y$ if there exists a positive integer $k$, $k>1$, such that $x$ is expressed as $k$ consecutive concatenations of $y$, denoted by $x=y^k$.

  Let $x$ be a word of length $m$ with $0<m\leq n$. 
  We say that there exists an \textit{occurrence} of $x$ in $y$, or, more 
simply, that $x$ \textit{occurs in} $y$, when $x$ is a factor of $y$.
  Every occurrence of $x$ can be characterised by a starting position in $y$. 
  We thus say that $x$ occurs at the \textit{starting position} $i$ in $y$ 
when $x=y[i \dd i + m - 1]$.
  Opposingly, we say that the word $x$ is an \textit{absent word} of
$y$ if it does not occur in $y$.
  The absent word $x$ of $y$ is \textit{minimal} if and only if all its proper factors 
occur in $y$. The set of all minimal absent words for a word $y$ is denoted by $\mina_y$. 
For example, if $y=\texttt{abaab}$, then $\mina_y=\{\texttt{aaa}, \texttt{aaba}, \texttt{bab}, \texttt{bb}\}$. In general, if we suppose that 
all the letters of the alphabet appear in $y$ which has length $n$, the length of a minimal absent word of $y$ lies between $2$ and $n+1$. 
It is equal to $n+1$ if and only if $y$ is of the form $a^n$ for some letter $a$. 
So, if $y$ contains occurrences of at least two different letters, the length of any minimal absent word of $y$ is upper bounded by $n$.

We now recall some basic facts about minimal absent words in Formal Language Theory. For further details and references the reader is recommended~\cite{fici}. A {\em language} over the alphabet $\Sigma$ is a set of finite words over $\Sigma$. A language is {\em regular} if it is recognised by a finite state automaton. A language is called {\em factorial} if it contains all the factors of its words, while it is called {\em antifactorial} if no word in the language is a proper factor of another word in the language. Given a word $x$, the language \emph{generated} by $x$ is the language $x^*=\{x^k\mid k\geq 0\}=\{\varepsilon, x, xx, xxx, \ldots\}$. The \emph{factorial closure} of a language $L$ is the language consisting of all factors of the words in $L$, that is, the language $\F_L=\cup_{y\in L} \F_y$.  
Given a factorial language $L$, one can define the (antifactorial) language of minimal absent words for $L$ as \[\mina_L=\{ aub\mid a,b \in \Sigma, u \in \Sigma^*,aub\notin L, au,ub\in L\}.\] 
Notice that $\mina_L$ is not the same language as the union of $\mina_x$ for $x\in L$. 
Every factorial language $L$ is uniquely determined by its (antifactorial) language of minimal absent words $\mina_L$, through the equation 
\begin{equation}\label{maw}
L=\Sigma^*\setminus \Sigma^*\mina_L\Sigma^*.
\end{equation}
The converse is also true, since by the definition of a minimal absent word we have \begin{equation}\label{maw2}
\mina_L=\Sigma L\cap L\Sigma \cap (\Sigma^*\setminus L).
\end{equation}
The previous equations define a bijection between factorial and antifactorial languages. Moreover, this bijection preserves regularity. In the case of a single  word $x$, the set of minimal absent words for $x$ is indeed the antifactorial language $\mina_{\F_{x}}$. Thus, applying \eqref{maw} and \eqref{maw2} to the language of factors of a single word, we have the following lemma.

\begin{lemma}\label{lem:bij}
Given two words $x$ and $y$, $x=y$ if and only if $\mina_x=\mina_y$.
\end{lemma}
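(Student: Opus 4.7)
The forward implication is immediate: if $x = y$ then trivially $\mina_x = \mina_y$. All the work is in the converse, and my plan is to obtain it as an almost direct consequence of the bijection between factorial and antifactorial languages already set up via equations \eqref{maw} and \eqref{maw2}.

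First I would spell out that, by the remark preceding the lemma, $\mina_x = \mina_{\F_x}$ and $\mina_y = \mina_{\F_y}$, i.e.\ the minimal absent words of a single word are precisely the minimal absent words of the (factorial) language of its factors. Assuming $\mina_x = \mina_y$, we therefore have $\mina_{\F_x} = \mina_{\F_y}$. Plugging this common antifactorial language into \eqref{maw} gives
\[
\F_x \;=\; \Sigma^*\setminus \Sigma^*\mina_{\F_x}\Sigma^* \;=\; \Sigma^*\setminus \Sigma^*\mina_{\F_y}\Sigma^* \;=\; \F_y,
\]
so the two sets of factors coincide.

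To conclude $x = y$, I would argue on lengths. Since $x \in \F_x = \F_y$, the word $x$ is a factor of $y$, hence $|x| \leq |y|$; symmetrically, $y \in \F_y = \F_x$ is a factor of $x$, giving $|y| \leq |x|$. Thus $|x| = |y|$, and a word which is a factor of another of the same length must equal it, so $x = y$.

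I do not foresee a real obstacle: the only subtlety is making sure the bijection cited from \eqref{maw}--\eqref{maw2} really does apply to $\F_x$ and $\F_y$, which are factorial by construction, and that the identification $\mina_x = \mina_{\F_x}$ for a single word is used explicitly. Once those two small points are in place, the argument is a one-line application of the bijection followed by the mutual-factor observation.
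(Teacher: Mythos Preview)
Your proposal is correct and follows exactly the approach the paper indicates: the paper does not give a detailed proof but simply says the lemma follows by ``applying \eqref{maw} and \eqref{maw2} to the language of factors of a single word,'' which is precisely the bijection argument you spell out. Your additional step of deducing $x=y$ from $\F_x=\F_y$ via the mutual-factor length argument fills in the only detail the paper leaves implicit.
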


Given a word $x$ of length $n$ over a fixed-sized alphabet, it is possible to compute a trie storing all the minimal absent words of $x$ in time and space linear in $n$. The size (number of nodes) of this trie is linear in $n$. Furthermore, we can retrieve $x$ from its set of minimal absent words in time and space linear in the size of the input trie representing the minimal absent words of $x$.  Indeed, the algorithm \textsc{MF-trie}, introduced in~\cite{Crochemore98automataand}, builds the tree-like deterministic automaton accepting the set of minimal absent words for a word $x$ taking as input the factor automaton of $x$, that is the minimal deterministic automaton recognising the set of factors of $x$. The leaves of the trie correspond to the minimal absent words for $x$, while the internal states are those of the factor automaton. Since the factor automaton of a word $x$ has less than $2|x|$ states (for details, see~\cite{CHL07}), this provides a representation of the minimal absent words of a word of length $n$ in space $\cO(n)$.

\subsection{Suffix array and suffix tree}
We denote by \SA{} the {\em suffix array} of a non-empty word $y$ of length $n$. \SA{} is an integer array of size $n$ storing the starting positions of all (lexicographically) sorted non-empty suffixes of $y$, i.e.~for all 
$1 \leq  r < n$ we have $y[\SA{}[r-1] \dd n-1] < y[\SA{}[r] \dd n - 1]$~\cite{SA}.
  Let \lcp{}$(r, s)$ denote the length of the longest common prefix between
$y[\SA{}[r] \dd n - 1]$ and $y[\SA{}[s] \dd n - 1]$ 
for all positions $r$, $s$ on $y$, and $0$ otherwise.
  We denote by \LCP{} the {\em longest common prefix} array of $y$ defined by 
\LCP{}$[r]=\lcp{}(r-1, r)$ for all $1 \leq r < n$, and 
\LCP{}$[0] = 0$. The inverse \iSA{} of the array \SA{} is defined by 
$\iSA{}[\SA{}[r]] = r$, for all $0 \leq r < n$. It is known that
  \SA{}~\cite{Nong:2009:LSA:1545013.1545570}, \iSA{}, and 
\LCP{}~\cite{indLCP} of a word of length $n$, over an integer alphabet, can be computed in time and space $\cO(n)$.

The \textit{suffix tree} $\mathcal{T}(y)$ of a non-empty word $y$ of length $n$ is a compact trie representing all suffixes of $y$. The nodes of the trie which become nodes of the suffix
tree are called {\em explicit} nodes, while the other nodes are called {\em implicit}. Each edge
of the suffix tree can be viewed as an upward maximal path of implicit nodes starting with an explicit node. Moreover, each node belongs to a unique path of that kind. Thus, each node of the trie can be represented in the suffix tree by the edge it belongs to and an index within the corresponding path.
We let  $\mathcal{L}(v)$  denote the \textit{path-label} of a node $v$, i.e., the concatenation of the edge labels along the path from the root to $v$. We say that $v$ is  path-labelled  $\mathcal{L}(v)$. Additionally, $\mathcal{D}(v)= |\mathcal{L}(v)|$ is used to denote  the \textit{string-depth} of node $v$. Node  $v$ is  a \textit{terminal} node if its path label is a suffix of $y$, that is, $\mathcal{L}(v) = y[i \dd n-1]$ for some $0 \leq i < n$; here $v$ is also labelled with index $i$. It should be clear that each  factor of $y$ is uniquely represented by either an explicit or an implicit node of $\mathcal{T}(y)$. The \textit{suffix-link} of a node $v$ with path-label $\mathcal{L}(v)= \alpha w$ is a pointer to the node path-labelled $w$, where  $\alpha \in \Sigma$ is a single letter and $w$ is a word. The suffix-link of $v$ is defined if $v$ is an explicit node of $\mathcal{T}(y)$, different from the root.
In any standard implementation of the suffix tree, we assume that each node of the suffix tree is able to access its parent. Note that once $\mathcal{T}(y)$ is constructed, it can be traversed in a depth-first manner to compute the string-depth $\mathcal{D}(v)$ for each node $v$. Let $u$ be the parent of $v$. Then the string-depth $\mathcal{D}(v)$ is computed by adding  $\mathcal{D}(u)$ to the length of the label of edge $(u,v)$. If $v$ is the root, then $\mathcal{D}(v) = 0$.
It is known that the suffix tree of a word of length $n$, over an integer alphabet, can be computed in time and space $\cO(n)$~\cite{farach1997optimal}.

\subsection{A measure of similarity between words based on minimal absent words}
In what follows, as already proposed in~\cite{MAW}, for every word $y$, the set of minimal absent words 
associated with $y$, denoted by $\mina_y$, is represented as a set of tuples $\langle a, i,j \rangle$, 
where the corresponding minimal absent word $x$ of $y$ is defined by
$x[0]=a$, $a \in \Sigma$, and $x[1 \dd m-1] = y[i \dd j]$, where $j-i+1=m \geq 2$.
It is known that if $|y|=n$ and $|\Sigma|=\sigma$, then $|\mina_y| \leq \sigma n$~\cite{Mignosi02}.

In~\cite{Chairungsee2012109}, Chairungsee and Crochemore introduced a measure of similarity between two words $x$ and $y$ based on the notion of minimal absent words. 
Let $\mina_x^\ell$ (respectively $\mina_y^\ell$) denote the set of minimal absent words of length at most $\ell$ of $x$ (respectively $y$).
The authors made use of a length-weighted index to provide a measure of similarity between $x$ and $y$, using their sample sets $\mina_x^\ell$ and $\mina_y^\ell$, by considering the length of
each member in the symmetric difference $(\mina_x^\ell \bigtriangleup \mina_y^\ell)$ of the sample sets. For sample sets $\mina_x^\ell$ and $\mina_y^\ell$, they defined this index to be
$$\LW_\ell(x,y) = \sum_{w \in \mina_x^\ell \bigtriangleup \mina_y^\ell} \frac{1}{|w|^2}.$$
In this paper we consider a more general measure of similarity for two words $x$ and $y$. It is based on the set $\mina_x \bigtriangleup \mina_y$, and is defined by
$$\LW(x,y) = \sum_{w \in \mina_x \bigtriangleup \mina_y} \frac{1}{|w|^2},$$
so without any restriction on the lengths of minimal absent words.
The smaller the value of $\LW(x,y)$, the more similar we assume $x$ and $y$ to be. Note that $\LW(x,y)$ is affected by both the cardinality of $\mina_x \bigtriangleup \mina_y$ and the lengths of its elements; longer words in $\mina_x \bigtriangleup \mina_y$ contribute less in the value of $\LW(x,y)$ than shorter ones. Hence, intuitively, the shorter the words in $\mina_x \bigtriangleup \mina_y$, the more dissimilar $x$ and $y$ are.

We provide the following examples for illustration. Let $x=\texttt{abaab}$ and $y=\texttt{aabbbaa}$. We have $\mina_x = \{ \texttt{aaa}, \texttt{aaba}, \texttt{bab}, \texttt{bb} \}$ and $\mina_y = \{ \texttt{aaa}, \texttt{bbbb}, \texttt{aba}, \texttt{abba}, \texttt{bab}, \texttt{baab} \}$. Thus, 
$$\mina_x \bigtriangleup \mina_y = \{ \texttt{aaba}, \texttt{aba}, \texttt{abba}, \texttt{baab}, \texttt{bb}, \texttt{bbbb} \},$$ so that
\[
\LW(\texttt{abaab},\texttt{aabbbaa})=4\cdot \frac{1}{4^2}+\frac{1}{3^2}+\frac{1}{2^2}=\frac{11}{18}.
\]
Similarly, 
\[
\LW(\texttt{aaa},\texttt{bbb})=\frac{17}{8}
\]
and
\[
\LW(\texttt{aaa},\texttt{aaaa})=\frac{41}{400}.
\]

This measure of similarity aims at quantifying the distance between two words by means of their minimal absent words. In fact, we show here that this measure is consistent with the notion of {\em distance} (metric) in the mathematical sense.

\begin{lemma}\label{lem:metric}
$\LW(x,y)$ is a metric on $\Sigma^*$.
\end{lemma}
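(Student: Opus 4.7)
The plan is to verify the four defining properties of a metric in turn: non-negativity, identity of indiscernibles, symmetry, and the triangle inequality.

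Non-negativity is immediate because each summand $1/|w|^2$ is positive, and symmetry follows at once from the identity $\mina_x \bigtriangleup \mina_y = \mina_y \bigtriangleup \mina_x$. For the identity of indiscernibles, I would use Lemma~\ref{lem:bij}: $\LW(x,y)=0$ holds exactly when every term in the sum is absent, that is, when $\mina_x \bigtriangleup \mina_y = \emptyset$, i.e. $\mina_x = \mina_y$; by Lemma~\ref{lem:bij} this is equivalent to $x=y$. Conversely, $x=y$ trivially gives $\mina_x = \mina_y$ and hence $\LW(x,y)=0$.

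The crux is the triangle inequality $\LW(x,z) \leq \LW(x,y) + \LW(y,z)$ for all $x,y,z \in \Sigma^*$. I would reduce it to a purely set-theoretic statement: for any three sets $A,B,C$,
\[
A \bigtriangleup C \;\subseteq\; (A \bigtriangleup B) \,\cup\, (B \bigtriangleup C).
\]
This is a standard fact, proved by case analysis on whether an element $w \in A \bigtriangleup C$ belongs to $B$ or not: if $w \in A \setminus C$ and $w \in B$, then $w \in B \setminus C$; if $w \notin B$, then $w \in A \setminus B$; the case $w \in C \setminus A$ is symmetric. Applying this to $A=\mina_x$, $B=\mina_y$, $C=\mina_z$, and summing the non-negative weights $1/|w|^2$, I obtain
\[
\LW(x,z) \;\leq\; \sum_{w \in (\mina_x \bigtriangleup \mina_y) \cup (\mina_y \bigtriangleup \mina_z)} \frac{1}{|w|^2} \;\leq\; \LW(x,y)+\LW(y,z),
\]
where the last inequality holds because every element of the union contributes to at least one of the two sums on the right.

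I do not anticipate any real obstacle here: the combinatorial content sits entirely in Lemma~\ref{lem:bij}, which gives the non-trivial direction of the identity of indiscernibles, while the triangle inequality is a soft consequence of the corresponding property of the symmetric difference together with positivity of the weights. In particular, the specific choice of weight $1/|w|^2$ plays no role beyond being positive; the argument would work for any non-negative weighting of words.
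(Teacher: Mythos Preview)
Your proof is correct and follows essentially the same approach as the paper: non-negativity and symmetry are immediate, identity of indiscernibles is reduced to Lemma~\ref{lem:bij}, and the triangle inequality is derived from the set-theoretic inclusion $A \bigtriangleup C \subseteq (A \bigtriangleup B) \cup (B \bigtriangleup C)$ together with positivity of the weights. Your added remark that any positive weight function would do is a valid observation that the paper does not make explicit.
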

\begin{proof}
It is clear that {\em non-negativity}, $\LW(x,y) \geq 0$  for any $x,y \in \Sigma^*$, and {\em symmetry}, $\LW(x,y)=\LW(y,x)$  for any $x,y \in \Sigma^*$, are satisfied. In addition, we have from Lemma~\ref{lem:bij} that $\LW(x,y)=0$ if and only if $x=y$, hence {\em identity} is also satisfied. Furthermore, given three sets $A$, $B$ and $C$, we have by the properties of the symmetric difference that $A \bigtriangleup B \subseteq (A \bigtriangleup C) \cup (C \bigtriangleup B)$. Thus, given three words $x$, $y$ and $z$, for every $w \in \mina_x \bigtriangleup \mina_y$ we have that $w \in \mina_x \bigtriangleup \mina_z$ or $w \in \mina_z \bigtriangleup \mina_y$ and therefore $w$ contributes by $1/|w|^2$ to $\LW(x,y)$ and to one of $\LW(x,z)$ and $\LW(z,y)$. This shows that $\LW(x,y) \leq \LW(x,z)+\LW(z,y)$ for any $x,y,z \in \Sigma^*$ and so {\em triangle inequality} is also satisfied.
\end{proof}

Based on this similarity measure we consider the following problem:

\defproblem{\MAW}{a word $x$ of length $m$ and a word $y$ of length $n$}{$\LW(x,y)$.}

In  Section~\ref{sec:seq_comp}, we show that this problem can be solved  in $\cO(m + n)$-time and space.
 
\subsection{Extension to circular words}

%\textcolor{red}{(modified by Gabriele 7 December)}

We also consider the aforementioned problem for two circular words. A circular word of length $m$ can be viewed as a traditional linear word which has the left- and right-most letters 
wrapped around. Under this notion, the same circular word can be seen as $m$ different 
linear words, which would all be considered equivalent. More formally, given a word $x$ of length $m$, we denote 
by $x^{\langle i \rangle}=x[i \dd m-1]x[0 \dd i-1]$, $0 \leq i < m$, the $i$-th \textit{rotation} of $x$, where $x^{\langle 0 \rangle}=x$. Given two words $x$ and $y$, we define $x\sim y$ if  there exists $i$, $0 \leq i < |x|$, such that $y=x^{\langle i \rangle}$. A \emph{circular word} $\tilde{x}$ is a conjugacy class of the equivalence relation $\sim$. Given a circular word $\tilde{x}$, any (linear) word $x$ in the equivalence class $\tilde{x}$ is called a \emph{linearisation} of the circular word $\tilde{x}$. Conversely, given a linear word $x$, we say that $\tilde{x}$ is a \emph{circularisation} of $x$ if $x$ is a linearisation of $\tilde{x}$.
%The set $\F_{\tilde{x}}$ of factors of the circular word $\tilde{x}$ is equal to the set $\F_{xx}\cap \Sigma^{\leq |x|}$ of factors of  $xx$ whose length is at most $|x|$, where $x$ is any linearisation of $\tilde{x}$. 

The factorial closures of the languages generated by two rotations of the same word $x^{\langle i \rangle}$ and $x^{\langle j \rangle}$, i.e.~the languages $\F_{(x^{\langle i \rangle})^*}$ and $\F_{(x^{\langle j \rangle})^*}$, coincide, so one can unambiguously define the (infinite) language $\F_{\tilde{x}}$ of factors of the circular word $\tilde{x}$  as the language $\F_{x^*}$, where $x$ is any linearisation of $\tilde{x}$. This is coherent with the fact that a circular word can be seen as a word drawn on a circle, where there is no beginning and no end.

In Section~\ref{sec:circ_seq_comp}, we give the definition of the set $\mina_{\tilde{x}}$ of minimal absent words for a circular word $\tilde{x}$. 
We prove that the following problem can be solved within the same time and space complexity as its counterpart in the linear case.

\defproblem{\CMAW}
{a word $x$ of length $m$ and a word $y$ of length $n$}{$\LW(\tilde{x},\tilde{y})$, 
where $\tilde{x}$ and $\tilde{y}$ are circularisations of $x$ and $y$, respectively.}

\subsection{Minimal absent words and $q$-grams}
In Section~\ref{sec:q-grams}, we present an $\cO(n)$-time and $\cO(n)$-space algorithm that given a word $x$ of length $n$ computes $q(x)$, the largest integer for which each $q(x)$-gram of $x$ is a $q(x)$-gram of some minimal absent word of $x$.

\defproblem{\QGR}{a word $x$ of length $n$}{$q(x)$}

%%%%%%%%%%%%%%%%%%%%%%%%%%%%%%%%%%%%%%%%%%%%%%%%%%%%%%%%%%%%%%%%%%%%%%%%%%%%%%%%%%%%%%%%%%%%%%%%%%%%%%
\section{Tight asymptotic bound on the number of minimal absent words}\label{sec:tightmaw}
%%%%%%%%%%%%%%%%%%%%%%%%%%%%%%%%%%%%%%%%%%%%%%%%%%%%%%%%%%%%%%%%%%%%%%%%%%%%%%%%%%%%%%%%%%%%%%%%%%%%%%

An important property of the minimal absent words of a word $x$, that is at the basis of the algorithms presented in next sections, is that their number is linear in the size of $x$.  Let $x$ be a word of length $n$ over an alphabet of size $\sigma$. In~\cite{Mignosi02} it is shown that the total number of minimal absent words of $x$ is smaller than or equal to $\sigma n$. In the following lemma we show that $\mathcal{O}(\sigma n)$ is a tight asymptotic bound for the number of minimal absent words of $x$ whenever $2 \leq \sigma \leq n$.

\begin{lemma}\label{tightbound}
The upper bound $\cO(\sigma n)$ on the number of minimal absent words of a word $x$ of length $n$ over an alphabet of size $\sigma$ is tight if $2 \leq \sigma \leq n$.
\end{lemma}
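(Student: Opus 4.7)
The plan is to exhibit, for every $\sigma\ge 2$ and for infinitely many lengths $n$ with $\sigma\le n$, an explicit word of length $n$ over an alphabet of size $\sigma$ whose set of minimal absent words has size $\Omega(\sigma n)$. The natural candidate is a de Bruijn sequence of order $k$: for every $\sigma\ge 2$ and every $k\ge 1$ there exists a word $d_k$ of length $n_k:=\sigma^k+k-1$ over $\Sigma$ in which every word of length $k$ over $\Sigma$ occurs exactly once as a factor. Such a $d_k$ is the right choice because it is as rich as possible in length-$k$ factors and, correspondingly, as poor as possible in length-$(k+1)$ factors.

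The core calculation is to lower-bound $|\mina_{d_k}|$ by counting only the minimal absent words of length $k+1$. Two observations suffice. First, because $d_k$ contains every $k$-gram, for any word $w=aub$ with $|w|=k+1$ the two length-$k$ proper factors $au$ and $ub$ of $w$ are automatically in $\F_{d_k}$; since $\F_{d_k}$ is closed under taking factors, all proper factors of $w$ lie in $\F_{d_k}$, so $w\in\mina_{d_k}$ iff $w\notin\F_{d_k}$. Second, $d_k$ contains exactly $n_k-k=\sigma^k-1$ length-$(k+1)$ windows, and these are pairwise distinct: the window at position $i$ is the $k$-gram $d_k[i\dd i+k-1]$ extended by $d_k[i+k]$, and the $\sigma^k$ length-$k$ windows of $d_k$ are themselves all distinct by the de Bruijn property. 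Combining the two gives
$$|\mina_{d_k}|\;\ge\;\sigma^{k+1}-(\sigma^k-1)\;=\;\sigma^k(\sigma-1)+1,$$
and since $\sigma n_k=\sigma^{k+1}+\sigma(k-1)$ this yields $|\mina_{d_k}|/(\sigma n_k)\to (\sigma-1)/\sigma\ge 1/2$ as $k\to\infty$. Already $k=1$ handles the boundary case $n=\sigma$, for which any permutation of $\Sigma$ has only $\sigma-1$ distinct bigrams and therefore at least $\sigma^2-\sigma+1$ minimal absent words of length $2$.

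To extend this to an arbitrary $n$ in the stated range, I would pad the best-matching de Bruijn sequence with a constant letter: pick $k$ maximal with $n_k\le n$ and set $x:=d_k\cdot a_1^{\,n-n_k}$. The padding can introduce at most $n-n_k$ new distinct length-$(k+1)$ factors (those crossing the junction or lying entirely inside the padding block), so the length-$(k+1)$ minimal absent words of $d_k$ that survive in $x$ number at least $\sigma^k(\sigma-1)+1-(n-n_k)$; this is $\Omega(\sigma n)$ whenever $n$ stays within a constant multiple of $n_k$. The part of the argument I expect to be the main obstacle is the regime in which $n$ is substantially larger than $n_k$ but still smaller than $n_{k+1}$: there the padded construction loses too many MAWs of length $k+1$, and one either has to supplement the count with MAWs of length $k+2$ or replace $x$ by a carefully chosen prefix of $d_{k+1}$. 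In any event, the de Bruijn family $\{d_k\}_{k\ge 1}$ alone already certifies that the upper bound $\cO(\sigma n)$ cannot be asymptotically improved whenever $2\le\sigma\le n$, which is the content of the lemma.
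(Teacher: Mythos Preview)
Your de Bruijn argument is correct and rather elegant: for each $\sigma\ge 2$ and each $k\ge 1$, the linear de Bruijn word $d_k$ of length $n_k=\sigma^k+k-1$ contains all $\sigma^k$ words of length $k$ and only $\sigma^k-1$ distinct words of length $k+1$, so every absent $(k{+}1)$-word is minimal and $|\mina_{d_k}|\ge \sigma^k(\sigma-1)+1=\Omega(\sigma n_k)$. This is a genuinely different route from the paper's proof.

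The paper instead constructs, for \emph{every} pair $(\sigma,n)$ with $2\le\sigma\le n$, an explicit word of length exactly $n$ with $\Omega(\sigma n)$ minimal absent words: for $\sigma=2$ the word $a_2a_1^{n-2}a_2$, and for $\sigma\ge 3$ the word $a_2a_1^ka_3a_1^k\cdots a_\sigma a_1^ka_1^m$ with $k=\lfloor n/(\sigma-1)\rfloor-1$, whose MAWs include almost all words of the form $a_ia_1^ja_l$. This construction is less sharp constant-wise than yours (it gives roughly $(\sigma-2)n$ MAWs versus your $(\sigma-1)n_k/\sigma\cdot\sigma$), but it applies uniformly to every length $n$, which is precisely what your approach does not deliver. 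The padding/prefix patch you sketch for bridging the gap between $n_k$ and $n_{k+1}$ is, as you correctly flag, the real obstacle: that gap has size $\sigma^k(\sigma-1)+1$, comparable to $\sigma n_k$ itself, so the naive padding can wipe out a constant fraction of your MAWs and the argument as written does not close. The paper's direct construction simply avoids this issue.

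Whether this counts as a gap in your proof depends on how one reads the lemma. If ``tight'' means ``there is no asymptotically smaller upper bound'', your infinite de Bruijn family already suffices and your final sentence is justified. If, as the paper's own proof indicates, the intended statement is that for every $(\sigma,n)$ with $2\le\sigma\le n$ some word of length $n$ attains $\Omega(\sigma n)$ MAWs, then your argument is incomplete and you would need either to repair the padding argument or to adopt a construction in the spirit of the paper's.
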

\begin{proof}
The total number of minimal absent words of $x$ is smaller than or equal to $\sigma n$~\cite{Mignosi02}. Hence $\cO(\sigma n)$ is an asymptotic upper bound for the number of minimal absent words of $x$. In what follows we provide examples to show that this bound is tight if $2 \leq \sigma \leq n$.

Let $\Sigma=\{a_1, a_2\}$, i.e. $\sigma=2$, and consider the word $x=a_2 a_1^{n-2} a_2$ of length $n$. All words of the form $a_2 a_1^k a_2$, for $0 \leq k \leq n-3$, are minimal absent words of $x$. Hence $x$ has at least $n-2=\Omega (n)$ minimal absent words.

Let $\Sigma=\{a_1,a_2,a_3,\ldots,a_\sigma\}$ with $3 \leq \sigma \leq n$ and consider the word $x=a_2 a_1^k a_3 a_1^k \ldots a_i a_1^k a_{i+1} \ldots a_{\sigma} a_1^k a_1^m$, where $k=\lfloor \frac{n}{\sigma-1}\rfloor -1$ and $m=n-(\sigma-1)(k+1)$. Note that $|x|=n$. Further note that $a_i a_1^j$ is a factor of $x$, for all $2 \leq i \leq \sigma $ and $0 \leq j \leq k$. Similarly, $a_1^j a_l$ is a factor of $x$, for all $3 \leq l \leq \sigma $ and $0 \leq j \leq k$. Thus, all proper factors of all the words in the set $S=\{ a_i a_1^j a_l \: | \: 0 \leq j \leq k, \: 2 \leq i \leq \sigma, \: 3 \leq l \leq \sigma \}$ occur in $x$. However, the only words in $S$ that occur in $x$ are the ones of the form $a_i a_1^k a_{i+1}$, for $2 \leq i < \sigma$. Hence $x$ has at least $(\sigma-1)(\sigma-2)(k+1)-(\sigma-2)=(\sigma-1)(\sigma-2)\lfloor \frac{n}{\sigma-1}\rfloor-(\sigma-2)=\Omega(\sigma n)$ minimal absent words.
\end{proof}

%%%%%%%%%%%%%%%%%%%%%%%%%%%%%%%%%%%%%%%%%%%%%%%%%%%%%%%%%%%%%%%%%%%%%%%%%%%%%%%%%%%%%%%%%%%%%%%%%%%%%%
\section{Sequence comparison using minimal absent words}\label{sec:seq_comp}
%%%%%%%%%%%%%%%%%%%%%%%%%%%%%%%%%%%%%%%%%%%%%%%%%%%%%%%%%%%%%%%%%%%%%%%%%%%%%%%%%%%%%%%%%%%%%%%%%%%%%%

The goal of this section is to provide the first linear-time and linear-space algorithm for computing the similarity measure (see Section~\ref{sec:prem}) between two words defined over a fixed-sized alphabet. 
To this end, we consider two words $x$ and $y$ of lengths $m$ and $n$, respectively, and their associated sets of minimal absent words, $\mina_x$ and $\mina_y$, respectively. 
Next, we give a linear-time and linear-space solution for the {\MAW} problem. 
%The idea of our strategy is quite simple and it basically consists of a merge sort on the two sets $\mina_x$ and $\mina_y$ after having them ordered.

It is known from~\cite{Crochemore98automataand} and~\cite{MAW} that we can compute the sets $\mina_x$ and $\mina_y$ in linear time and space relative to the two lengths $m$ and $n$, respectively.
The idea of our strategy consists of a merge sort on the sets $\mina_x$ and $\mina_y$, after they have been ordered with the help of suffix arrays.
To this end, we construct the suffix array associated to the word $w=xy$, together with the implicit $\LCP$ array corresponding to it. 
All of these structures can be constructed in time and space $\cO(m+n)$, as mentioned earlier. Furthermore, we can preprocess the array \textsf{LCP} for range minimum queries, which we denote by $\textsf{RMQ}_\textsf{LCP}$~\cite{Fischer11}.
With the preprocessing complete, the longest common prefix $\LCE$ of two suffixes of $w$ starting at positions $p$ and $q$ can be computed in
constant time~\cite{LCE}, using the formula
\[\LCE(w,p,q)=\textsf{LCP}[\textsf{RMQ}_{\textsf{LCP}}(\textsf{iSA}[p]+1,\textsf{iSA}[q])].\]

Using these data structures, it is straightforward to sort the tuples in the sets $\mina_x$ and $\mina_y$ lexicographically. 
%That is, two tuples $x_1,x_2\in \mina_x$, are ordered according to the letter following their longest common prefix, or when it is not the case, with the one being the prefix, coming first. 
That is, two tuples, $x_1$ and $x_2$, are ordered such that the one being the prefix of the other comes first, or according to the letter following their longest common prefix, when the former is not the case.
In our setting, the latter is always the case since $\mina_x$ is prefix-free by the definition of minimal absent words. To do this, we simply go once through the suffix array associated with $w$ and assign to each tuple in $\mina_x$, respectively $\mina_y$, the rank of the suffix starting at the position indicated 
by its second component, in the suffix array. 
Since sorting an array of $n$ distinct integers, such that each is in $[0,n-1]$, can be done in time $\cO(n)$ (using for example bucket sort) we can sort each of the sets of minimal absent words, taking into consideration the letter on the first position and these ranks. Thus, from now on, we assume that $\mina_x=\{x_0, x_1,\ldots, x_k\}$ where $x_i$ is  lexicographically smaller than $x_{i+1}$, for $0\leq i <k\leq \sigma m$, and $\mina_y=\{y_0, y_1,\ldots, y_\ell\}$, where $y_j$ is  lexicographically smaller than $y_{j+1}$, for $0\leq j <\ell\leq \sigma n$.

We now proceed with the merge. Thus, considering that we are analysing $x_{i}$ from $\mina_x$ and $y_{j}$ from $\mina_y$, we note that the two are equal if and only if $x_i[0]=y_j[0]$ and 
$$\LCE(w,x_i[1], |x|+y_j[1])\geq \ell, \mbox{ where } \ell=x_i[2]-x_i[1]=y_j[2]-y_j[1].$$ 
In other words, the two minimal absent words are equal if and only if their first letters coincide, they have equal length $\ell+1$, and the longest common prefix of the suffixes of $w$ starting at the positions indicated by 
the second components of the tuples has length at least $\ell$.

Such a strategy will empower us with the means for constructing a new set $\mina_{x,y}=\mina_x\cup\mina_y$. At each step, when analysing tuples $x_i$ and $y_j$, we proceed as follows:
$$
\mina_{x,y} = \left\{
                \begin{array}{l c l}
                  \mina_{x,y}\cup \{x_i\}, & \mbox{and increment } i, &\qquad\mbox{if } x_i < y_j;\\
                  \mina_{x,y}\cup \{y_j\}, & \mbox{and increment } j, &\qquad\mbox{if } x_i > y_j;\\
                  \mina_{x,y}\cup \{x_i=y_j\}, &   \mbox{and increment both } i \mbox{ and } j, &\qquad \mbox{if } x_i = y_j.
                \end{array}
              \right.
$$
Observe that the last condition is saying that basically each common tuple is added only once to their union.

Furthermore, simultaneously with this construction we can also calculate the similarity between the words, given by $\LW(x,y)$, which is initially set to $0$. 
Thus, at each step, when comparing the tuples $x_i$ and $y_j$, we update 
$$
\LW(x,y) = \left\{
                \begin{array}{l c l}
                  \LW(x,y) + \frac{1}{|x_i|^2}, & \mbox{and increment } i, &\mbox{if } x_i < y_j;\\
                  \LW(x,y) + \frac{1}{|y_j|^2}, & \mbox{and increment } j, &\mbox{if } x_i > y_j;\\
                  \LW(x,y), &  \mbox{and increment both } i \mbox{ and } j, &\mbox{if } x_i = y_j.
                \end{array}
              \right.
$$
We impose the increment of both $i$ and $j$ in the case of equality as in this case we only look at the symmetric difference between the sets of minimal absent words.

As all these operations take constant time and we perform them once per each tuple in $\mina_x$ and $\mina_y$, it is easily concluded that the whole operation takes, in the case of a fixed-sized alphabet, time and space $\cO(m+n)$. 
Thus, we can compute the symmetric difference between the {\em complete} sets of minimal absent words, as opposed to~\cite{Chairungsee2012109}, of two words defined over a fixed-sized alphabet, in linear time and space with respect to the lengths of the two words. We hence obtain the following result:

\begin{theorem}
\label{the:maw}
Problem \MAW{} can be solved in time and space $\cO(m+n)$.
\end{theorem}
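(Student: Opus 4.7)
The plan is to reduce the computation of $\LW(x,y)$ to a linear-time sorted merge of $\mina_x$ and $\mina_y$, with each minimal absent word represented as a tuple $\langle a,i,j\rangle$ as in Section~\ref{sec:prem}. First, I would invoke the algorithms of~\cite{Crochemore98automataand,MAW} to obtain $\mina_x$ and $\mina_y$ in $\cO(m+n)$ time and space. Then I would build the suffix array, inverse suffix array, and \LCP{} array of the concatenation $w=xy$ in $\cO(m+n)$ time, and preprocess \LCP{} for range-minimum queries, so that $\LCE(w,p,q)$ can be answered in $\cO(1)$ via the displayed formula for any pair of positions $p,q$ of $w$.

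Next, I would sort each of $\mina_x$ and $\mina_y$ lexicographically in $\cO(m+n)$ time. Since every tuple encodes a first letter together with a factor of $w$ starting at a specific position, its lexicographic rank among the tuples of a common set is determined by the pair (first letter, inverse-suffix-array rank of the starting position). Both coordinates lie in a range of size $\cO(m+n)$, so a two-pass bucket sort produces the desired ordering in linear time.

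Finally, I would perform a standard linear merge on the two sorted lists, computing $\LW(x,y)$ on the fly. For the current heads $\langle a_1,i_1,j_1\rangle\in\mina_x$ and $\langle a_2,i_2,j_2\rangle\in\mina_y$, I would compare $a_1$ with $a_2$; if they agree, a single $\cO(1)$ \LCE{} query on $w$ (with the $\mina_y$ position offset by $|x|$) together with the two tuple lengths $j_1-i_1+1$ and $j_2-i_2+1$ decides whether the words are equal or which is smaller, and accordingly updates $\LW(x,y)$ by the piecewise rule displayed above. Since $|\mina_x|+|\mina_y|=\cO(\sigma(m+n))=\cO(m+n)$ for fixed $\sigma$, the whole merge takes $\cO(m+n)$ time and space.

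The hard part, and what justifies $\cO(1)$ work per comparison, is that although each individual set $\mina_x$ and $\mina_y$ is prefix-free by the definition of a minimal absent word, a tuple in one set may still be a proper prefix of a tuple in the other, so the comparison step has to distinguish the \emph{prefix} case from the \emph{differ after \LCE{}} case. Both subcases are handled uniformly by comparing the \LCE{} value against the two tuple lengths after the leading letters have been matched, which together with the preceding steps yields Theorem~\ref{the:maw}.
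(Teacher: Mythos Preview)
Your proposal is correct and follows essentially the same route as the paper: compute $\mina_x$ and $\mina_y$, build the suffix array, inverse suffix array, and \LCP{} array of $w=xy$ with RMQ preprocessing for $\cO(1)$ \LCE{} queries, bucket-sort each set by (first letter, \iSA{} rank), and merge while accumulating $\LW(x,y)$. Your final paragraph about the cross-set prefix case is a useful clarification that the paper leaves implicit, but otherwise the argument matches the paper's exposition in Section~\ref{sec:seq_comp} almost step for step.
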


%%%%%%%%%%%%%%%%%%%%%%%%%%%%%%%%%%%%%%%%%%%%%%%%%%%%%%%%%%%%%%%%%%%%%%%%%%%%%%%%%%%%%%%%%%%%%%%%%%%%%%
\section{Circular sequence comparison using minimal absent words}\label{sec:circ_seq_comp}
%%%%%%%%%%%%%%%%%%%%%%%%%%%%%%%%%%%%%%%%%%%%%%%%%%%%%%%%%%%%%%%%%%%%%%%%%%%%%%%%%%%%%%%%%%%%%%%%%%%%%%

%\textcolor{red}{(modified by Gabriele 7 December)}

In this section we extend the notion of minimal absent words to circular words. 
Recall from Section \ref{sec:prem} that, given a circular word $\tilde{x}$, the set $\F_{\tilde{x}}$ of factors of $\tilde{x}$ is defined as the (infinite) set $\F_{x^*}$, where $x$ is any linearisation of $x$. We therefore define the set $\mina_{\tilde{x}}$ of minimal absent words of the circular word $\tilde{x}$ as the set of minimal absent words of the language $\F_{x^*}$, where $x$ is any linearisation of $x$.

For instance, let $x=\texttt{aabbabb}$.  Then we have
$$\mina_{\tilde{x}}=\{\texttt{aaa}, \texttt{aba}, \texttt{bbb}, \texttt{aabbaa}, \texttt{babbab}\}.$$ 

Although $\F_{x^*}$ is an infinite language, the set $\mina_{\tilde{x}}=\mina_{\F_{x^*}}$ of minimal absent words of $\tilde{x}$ is always finite. More precisely, we have the following structural lemma (see also \cite{FiReRi17}).
%Remark: It is easy to see that if $L\subset A^*$ is a finite factorial language, then $\mina_{L^*}$ is finite.

\begin{lemma}\label{lem:stru}
  Let $\tilde{x}$ be a circular word and $x$ any linearisation of $\tilde{x}$. Then 
\begin{equation}\label{eq:cmf}
   \mina_{\tilde{x}}=\mina_{xx}^{|x|}.
\end{equation}
That is, the minimal absent words of the circular word $\tilde{x}$ are precisely the minimal absent words of the (linear) word $xx$ whose length is not greater than the length of $x$, where $x$ is any linearisation of $\tilde{x}$.
\end{lemma}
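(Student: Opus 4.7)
My plan is to prove the two inclusions in~\eqref{eq:cmf}. The key preliminary fact, which I would establish first, is that for every $k \le |x| + 1$, the length-$k$ factors of $x^*$ coincide with the length-$k$ factors of $xx$: any factor of $x^*$ of length at most $|x| + 1$ fits inside two consecutive copies of $x$, hence inside $xx$, and conversely every factor of $xx$ is a factor of $x^*$ since $xx \in x^*$.

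The inclusion $\mina_{xx}^{|x|} \subseteq \mina_{\tilde{x}}$ is then immediate: if $w = aub \in \mina_{xx}$ with $|w| \le |x|$, then each of $w$, $au$, $ub$ has length at most $|x| + 1$, so the preliminary fact transfers the membership/non-membership in the relevant factor sets from $\F_{xx}$ to $\F_{x^*}$, giving $au, ub \in \F_{x^*}$ and $w \notin \F_{x^*}$; by definition, $w \in \mina_{\F_{x^*}} = \mina_{\tilde{x}}$.

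For the reverse inclusion, the preliminary fact reduces everything to proving the length bound $|w| \le |x|$ for every $w \in \mina_{\tilde{x}}$. Arguing by contradiction, I would assume $w = aub \in \mina_{\tilde{x}}$ with $|w| \ge |x| + 1$, set $n = |x|$, and fix occurrences $au = x^{\infty}[i \dd i + |au| - 1]$ and $ub = x^{\infty}[j \dd j + |ub| - 1]$ in the $n$-periodic infinite word $x^{\infty}$ obtained by infinitely repeating $x$. Reading the common factor $u$ from both $au$ and $ub$ and using the $n$-periodicity of $x^{\infty}$, I get the relations $x[m] = x[(m + d) \bmod n]$ for every $m$ in a set $S$ of $|u|$ consecutive residues modulo $n$, where $d = (j - i - 1) \bmod n$. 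When $|u| \ge n$ the set $S$ covers all residues; when $|u| = n - 1$, exactly one residue is missing, and a chasing argument around the cycle $m \mapsto (m + d) \bmod n$ recovers the missing relation. Either way, $x$ has period $d$, so $d$ is a multiple of the primitive period $p$ of $x$. Since $|au| = |ub|$, this gives $j + |ub| - 1 \equiv i + |au| \pmod{p}$, and $p$-periodicity of $x$ then forces the letter at position $i + |au|$ in $x^{\infty}$ to equal $b$; hence $aub$ occurs at position $i$ in $x^{\infty}$, contradicting $w \notin \F_{x^*}$.

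The main obstacle is this length bound, and specifically the boundary case $|w| = |x| + 1$: this is where $S$ is missing a single residue and the cycle-chasing step is required to propagate periodicity across the gap. Once that step is in place, the argument handles uniformly all $|w| \ge |x| + 1$, and combining the length bound with the preliminary fact completes the proof.
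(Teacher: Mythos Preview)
Your approach is correct but follows a different route from the paper for the key step, the length bound $|w|\le|x|$. The paper argues that since $aub\notin\F_{x^*}$ while $au,ub\in\F_{x^*}$, the word $u$ admits two distinct one-letter extensions on each side within $\F_{x^*}$; but every length-$|x|$ factor of $x^*$ is a rotation of $x$, so two distinct extensions of a word of length $\ge|x|-1$ would yield two rotations of $x$ with different letter multisets, which is impossible, whence $|u|\le|x|-2$. This one-line Parikh-vector observation replaces all of your periodicity machinery.

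Your periodicity argument also goes through, but one step needs tightening. From ``$x$ has period $d$'' you cannot conclude that $d$ is a multiple of the smallest period $p$ of $x$: the smallest period of a word need not divide every period (e.g.\ $\texttt{abaab}$ has smallest period $3$ and also period $5$), and $p$-periodicity of $x$ does not transfer to $x^{\infty}$ unless $p\mid n$. What your chasing argument actually establishes is the stronger relation $x[m]=x[(m+d)\bmod n]$ for all $m$, i.e.\ $x^{\infty}$ itself has period $d$; combined with its built-in period $n$ this gives period $g=\gcd(d,n)$, and since $g\mid(j-i-1)$ you get $x^{\infty}[i+|au|]=x^{\infty}[j+|ub|-1]=b$ directly, with no need to mention $p$. (This also absorbs the degenerate case $d=0$, where $g=n$.) The paper's argument is shorter; yours incidentally shows that $x$ must be a proper power whenever a minimal absent word of length $>|x|$ is assumed, but that extra information is not needed here.
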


\begin{proof}
If $aub$, with $a,b\in \Sigma$ and $u\in \Sigma^*$, is an element in $\mina_{xx}^{|x|}$, then clearly  $aub\in \mina_{\F_{x^*}}=\mina_{\tilde{x}}$.

Conversely, let $aub$, with $a,b\in \Sigma$ and $u\in \Sigma^*$, be an element in $\mina_{\tilde{x}}=\mina_{\F_{x^*}}$. Then $aub\notin \F_{x^*}$, while $au,ub\in \F_{x^*}$. So, there exists a letter $\bar{b}$ different from $b$ such that $au\bar{b}\in \F_{x^*}$ and a letter $\bar{a}$ different from $a$ such that $\bar{a}ub\in \F_{x^*}$. Therefore, $au,\bar{a}u,ub,u\bar{b}\in \F_{x^*}$.
Any word of length at least $|x|-1$ cannot be extended to the right nor to the left by different letters in $\F_{x^*}$ as such factors would yield two rotations of $x$ with different letter multiplicities. Hence $|aub|\leq |x|$. Since $au$ and $ub$ are factors of some rotation of $x$, we have $au,ub\in \F_{xx}$, whence $aub\in \mina_{xx}$.
\end{proof}

The equality (\ref{eq:cmf}) was first introduced as the definition of the set of minimal absent words of a circular word in~\cite{6979851}. 

Recall that a word $x$ is \emph{a power} of a word $y$ if there exists a positive integer $k$, $k>1$, such that $x$ is expressed as $k$ consecutive concatenations of $y$, denoted by $x=y^k$. 
Conversely, a word $x$ is {\em primitive} if $x=y^k$ implies $k=1$. Notice that a word is primitive if and only if any of its rotations is. We can therefore extend the definition of primitivity to circular words. The definition of $\mina_{\tilde{x}}$ does not allow one to uniquely reconstruct $\tilde{x}$ from $\mina_{\tilde{x}}$, unless $\tilde{x}$ is known to be primitive, since it is readily verified that $\F_{x^*}=\F_{xx^*}$ and therefore also the minimal absent words of these two languages coincide. However, from the algorithmic point of view, 
this issue can be easily managed by storing the length $|x|$ of a linearisation $x$ of $\tilde{x}$ together with the set $\mina_{\F_{x^*}}$.
Moreover, in most practical scenarios, for example when dealing with biological sequences, it is highly unlikely that the input circular word is not primitive.

Using the result of Lemma~\ref{lem:stru}, we can easily extend the algorithm described in the previous section to the case of circular words. 
That is, given two circular words $\tilde{x}$ of length $m$ and  $\tilde{y}$ of length $n$, we can compute in time and space $\cO(m+n)$ 
the distance $\LW(\tilde{x},\tilde{y})$. We hence obtain the following result.

\begin{theorem}
\label{the:cmaw}
Problem \CMAW{} can be solved in time and space $\cO(m+n)$.
\end{theorem}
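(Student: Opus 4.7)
The plan is to reduce the circular problem to the linear one via Lemma~\ref{lem:stru} and then invoke Theorem~\ref{the:maw} essentially as a black box on doubled inputs.

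First, I would form the linear words $w_x = xx$ of length $2m$ and $w_y = yy$ of length $2n$. By Lemma~\ref{lem:stru}, the set $\mina_{\tilde{x}}$ is exactly the subset of $\mina_{w_x}$ consisting of words of length at most $m$, and analogously $\mina_{\tilde{y}} = \mina_{w_y}^{\,n}$. Using the algorithm of~\cite{Crochemore98automataand} or~\cite{MAW}, I compute $\mina_{w_x}$ and $\mina_{w_y}$ in time and space $\mathcal{O}(m+n)$, represented as tuples $\langle a, i, j\rangle$ pointing into $w_x$ and $w_y$ respectively. A single linear scan filters out the tuples whose length exceeds $m$ (respectively $n$), yielding representations of $\mina_{\tilde{x}}$ and $\mina_{\tilde{y}}$ in linear time.

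Next, I would apply the merge-sort comparison strategy of Section~\ref{sec:seq_comp} verbatim, with one cosmetic change: the underlying concatenated string becomes $W = w_x \, w_y = xxyy$ of length $2m+2n$, so that the tuples in both sets reference positions in $W$. I build $\SA$, $\iSA$, and $\LCP$ of $W$ in time and space $\mathcal{O}(m+n)$, preprocess $\LCP$ for range-minimum queries, sort the two tuple sets lexicographically via the suffix-array ranks and bucket sort, and merge them while updating $\LW(\tilde{x},\tilde{y})$ according to the same three-case rule. Each equality test between two tuples reduces to one $\LCE$ query on $W$, performed in constant time.

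Because $|\mina_{\tilde{x}}| \le |\mina_{w_x}| = \mathcal{O}(\sigma m) = \mathcal{O}(m)$ for fixed $\sigma$, and likewise for $\tilde{y}$, the merge touches only $\mathcal{O}(m+n)$ tuples and runs in linear time. The only non-routine point to check is the correctness of this reduction: that every element of $\mina_{\tilde{x}} \bigtriangleup \mina_{\tilde{y}}$ is correctly identified with the same contribution $1/|w|^2$ as in the definition of $\LW$. This is immediate from Lemma~\ref{lem:stru}, which guarantees that the filtered tuple sets coincide set-theoretically with $\mina_{\tilde{x}}$ and $\mina_{\tilde{y}}$, so the merge produces exactly the symmetric difference $\mina_{\tilde{x}} \bigtriangleup \mina_{\tilde{y}}$ and sums $1/|w|^2$ over it. This establishes the claimed $\mathcal{O}(m+n)$ time and space bound and proves the theorem.
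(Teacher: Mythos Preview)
Your proposal is correct and follows essentially the same approach as the paper: reduce to the linear case via Lemma~\ref{lem:stru} by doubling the inputs, filter the minimal absent words by length, and reuse the merge-based algorithm of Section~\ref{sec:seq_comp} on the concatenation $xxyy$. The paper states this reduction in a single sentence, whereas you spell out the details (the filtered tuple sets, the index shift into $W$, the size bounds), but the underlying idea is identical.
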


%%%%%%%%%%%%%%%%%%%%%%%%%%%%%%%%%%%%%%%%%%%%%%%%%%%%%%%%%%%%%%%%%%%%%%%%%%%%%%%%%%%%%%%%%%%%%%%%%%%%%%
\section{From minimal absent words to $q$-grams}\label{sec:q-grams}
%%%%%%%%%%%%%%%%%%%%%%%%%%%%%%%%%%%%%%%%%%%%%%%%%%%%%%%%%%%%%%%%%%%%%%%%%%%%%%%%%%%%%%%%%%%%%%%%%%%%%%

In this section we consider a word $x$ of length $n$ over an integer alphabet. Our aim is to provide a measure of the extent to which some {\em positive} information about $x$, the $q$-gram sets of $x$, exist unaltered in the set of minimal absent words of $x$, which can be seen as {\em negative} information about $x$. More specifically, we define $q(x)$ as the largest integer for which each $q(x)$-gram of $x$ is a $q(x)$-gram of some minimal absent word of $x$. Note that, for instance, the set of $q$-grams of $x$ is used in molecular biology applications such as genome assembly~\cite{Pevzner14082001}.

\begin{example}
 Consider the word $x=\texttt{abaab}$ over the alphabet $\Sigma=\{ \texttt{a},\texttt{b} \}$. Its set of minimal absent words is $\mina_x=\{\texttt{aaa}, \texttt{aaba}, \texttt{bab}, \texttt{bb}\}$. The set of $2$-grams of $x$ is $\{\texttt{aa}, \texttt{ab}, \texttt{ba} \}$ and, as can be easily seen, each of them is a factor of some word in $\mina_x$. The set of $3$-grams of $x$ is $\{ \texttt{aab}, \texttt{aba}, \texttt{baa} \}$ and we observe that $\texttt{baa}$ is not a factor of any of the words in $\mina_x$. We can hence conclude that in this case $q(x)=2$.
\end{example}
We present a non-trivial $\cO(n)$-time and $\cO(n)$-space algorithm to compute $q(x)$.

\subsection{Useful properties}

Let $h(x)$ be the length of a shortest factor of $x$ that occurs only once in $x$. In addition, let $t(x)$ be the length of a shortest infix (factor that is not a prefix nor a suffix) of $x$ that occurs only once in $x$.

Following the proof of~\cite[Proposition~10]{FICI2006214}, any factor of a word $x$ that occurs more than once in $x$ is a factor of some minimal absent word of $x$ and hence $q(x) \geq h(x)-1$.

\begin{lemma}\label{lem:bound}
For any word $x$ it holds that $q(x) \leq t(x)+1$.
\end{lemma}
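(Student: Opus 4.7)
The plan is to produce a factor of $x$ of length $t(x)+2$ that is a factor of no minimal absent word of $x$; this will immediately yield $q(x)\le t(x)+1$. Let $u$ be a shortest infix of $x$ that occurs only once, so $|u|=t(x)$. Because $u$ is an infix, it is preceded by some letter $a$ and followed by some letter $b$ at its unique occurrence, and the resulting factor $aub$ has a unique occurrence in $x$, say at position $p$ (any occurrence of $aub$ would induce an occurrence of $u$, which must sit at its unique location). The same reasoning shows that $au$ and $ub$ also occur uniquely in $x$, at positions $p$ and $p+1$ respectively.

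Suppose, for contradiction, that $aub$ is a factor of some $w\in\mina_x$. Because $aub\in\F_x$ while $w\notin\F_x$, we have $aub\ne w$, so $aub$ is a proper factor of $w$ and we can write $w=\alpha\,aub\,\beta$ with $\alpha\beta\ne\varepsilon$. Consider the two proper factors of $w$ of length $|w|-1$, namely $w'=w[0\dd|w|-2]$ and $w''=w[1\dd|w|-1]$; both lie in $\F_x$ by the definition of a minimal absent word. The whole argument consists in locating $w'$ and $w''$ inside $x$ via uniqueness and then gluing them to conclude $w\in\F_x$, contradicting $w\in\mina_x$.

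I would then split into three cases. If both $\alpha$ and $\beta$ are nonempty, each of $w'$ and $w''$ still contains the full factor $aub$, so both are forced by uniqueness to sit over position $p$ of $x$; their overlap covers the whole of $w$ and gives $w=x[p-|\alpha|\dd p-|\alpha|+|w|-1]\in\F_x$. If $\alpha=\varepsilon$, then $w=aub\beta$ with $|\beta|\ge 1$; the prefix $w'$ still begins with $aub$, so it starts at $p$ in $x$, while $w''=ub\beta$ begins with $ub$ and, by uniqueness of $ub$, starts at $p+1$; the same gluing yields $w\in\F_x$. The case $\beta=\varepsilon$ is symmetric, using the uniqueness of $au$ at position $p$.

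The main obstacle I anticipate is the boundary cases $\alpha=\varepsilon$ and $\beta=\varepsilon$, because there one of the two length-$(|w|-1)$ proper factors of $w$ no longer contains the whole of $aub$ and the location argument must fall back on the inherited uniqueness of $au$ or $ub$; inside the body of $w$ the overlap computation is a routine position chase.
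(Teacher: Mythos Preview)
Your proof is correct. Both you and the paper exhibit the same witness factor $aub$ of length $t(x)+2$ and argue it cannot occur in any $w\in\mina_x$, but the executions differ.

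The paper's argument is a one-liner: since $aub$ is a proper factor of $w$, the word $u$ sits strictly inside $w$, i.e.\ $u$ is a factor of the largest infix $w[1\dd|w|-2]$. That infix necessarily occurs at least twice in $x$ (once inside an occurrence of $w[0\dd|w|-2]$ and once inside an occurrence of $w[1\dd|w|-1]$; if these two induced occurrences coincided, $w$ itself would lie in $x$). Hence $u$ occurs at least twice, contradicting the choice of $u$.

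Your argument reaches the contradiction from the other end: instead of concluding that $u$ repeats, you use the uniqueness of $aub$, $au$, $ub$ to pin down the positions of $w'=w[0\dd|w|-2]$ and $w''=w[1\dd|w|-1]$ in $x$ and then glue them to force $w\in\F_x$. This is more explicit and requires your three-way case split on which of $\alpha,\beta$ is empty, whereas the paper's formulation sidesteps the split entirely because the statement ``$u$ is an infix of $w$'' holds uniformly in all three cases. On the other hand, your version is fully self-contained and does not need the auxiliary observation about infixes of minimal absent words repeating.
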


\begin{proof}
Consider any non-empty infix $u$ of $x$ that occurs only once and suppose it is preceded by letter $a$ and followed by letter $b$. Then $aub$ can not be a factor of any of the minimal absent words of $x$ as the largest infix of any minimal absent word of $x$ must occur at least twice in $x$, once in an occurrence of the largest proper prefix of this minimal absent word in $x$ and once in an occurrence of its larger proper suffix in $x$. Note that $au \neq ub$, since otherwise $aub=a^{2+|u|}$ and then $u$ does not occur only once in $x$. It thus follows that $q(x) \leq t(x)+1$.
\end{proof}

\begin{fact}
We can compute $h(x)$ and $t(x)$ --- and hence obtain the relevant bounds for $q(x)$ --- in time $\cO(n)$ for a word $x$ of length $n$.
\end{fact}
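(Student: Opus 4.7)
The plan is to build the suffix tree $\mathcal{T}$ of $x\$$, where $\$$ is a fresh sentinel letter not in $\Sigma$; this takes $\cO(n)$ time, and a single depth-first traversal suffices to record $\mathcal{D}(v)$ for every node $v$. For each leaf $v$ of $\mathcal{T}$ let $i_v$ be the starting position in $x\$$ of the suffix corresponding to $v$, write $p_v$ for its parent, and set $\ell_v = \mathcal{D}(p_v) + 1$.

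The key claim I would establish is that if $\ell_v \leq n - i_v$, then the shortest factor of $x$ that starts at position $i_v$ and occurs only once in $x$ has length exactly $\ell_v$; and otherwise no factor of $x$ starting at $i_v$ is unique in $x$. Indeed, any factor of $x\$$ that starts at $i_v$ and has length $k \leq \mathcal{D}(p_v)$ corresponds to a locus on the root-to-$v$ path at or above $p_v$, and hence lies in the subtree of a branching node containing at least two leaves, so it occurs at least twice in $x\$$. At length $\ell_v$, in contrast, the locus is strictly inside the edge $(p_v, v)$, the only leaf below it is $v$, and the factor $x\$[i_v \dd i_v + \ell_v - 1]$ is unique in $x\$$. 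Since $\$$ occurs only at position $n$, the number of occurrences of any $\$$-free factor in $x$ equals its number of occurrences in $x\$$, and both directions follow.

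Consequently, $h(x)$ is the minimum of $\ell_v$ over all leaves $v$ satisfying $i_v < n$ and $\ell_v \leq n - i_v$. For $t(x)$ I would additionally impose that the witnessing factor be an infix, which translates to $i_v \geq 1$ together with $i_v + \ell_v \leq n - 1$. A subtle point worth highlighting is that restricting attention to the shortest unique factor at each starting position is without loss of generality: if this shortest factor already ends exactly at position $n - 1$, no longer factor can fit inside $x$ at all, so no longer unique infix exists at $i_v$ either. Gathering the two minima is then a single linear scan over the $n + 1$ leaves, yielding $\cO(n)$ total time and space.

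The main obstacle is disciplined sentinel bookkeeping, namely excluding the leaf for the suffix $\$$ and those cases in which $\ell_v$ would force the unique factor of $x\$$ to spill into the position labelled $\$$; once this is pinned down by the inequality $\ell_v \leq n - i_v$, the remainder of the argument reduces to a routine suffix-tree traversal.
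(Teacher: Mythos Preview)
Your argument is correct and follows essentially the same route as the paper. The paper states the Fact without a formal proof, but later (in the subsection ``Computing $q(x)$'') observes that the unique factors of $x$ are precisely the labels of leaves and of implicit nodes on leaf edges in the suffix tree, so that for a leaf $u$ with $\mathcal{L}(u)=x[i\dd n-1]$ the shortest unique factor starting at $i$ has length $\mathcal{D}(parent(u))+1$; a linear scan over the leaves then yields both $h(x)$ and $t(x)$. Your proposal spells this out with an explicit sentinel and with careful handling of the boundary condition $\ell_v \leq n - i_v$, and you also justify why restricting to the shortest unique factor at each position loses nothing when searching for the shortest unique \emph{infix}; these details are implicit rather than stated in the paper, but the underlying idea is the same.
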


\begin{remark}
The relation between minimum unique substrings and maximum repeats
 has been investigated in~\cite{DBLP:journals/fuin/IlieS11}.
\end{remark}

Note that all 1-grams $a_i$ that occur in $x$ are trivially contained in some minimal absent word of the form $a_i^k$ for some $k$, so in what follows we assume that the factors of $x$, for which we want to examine when they are factors of some minimal absent word of $x$, are of the form $aub$, where $a$ and $b$ are (not necessarily distinct) letters and $u$ a (possibly empty) word. It is clear that any such factor $aub$ of $x$ occurring only once can not be a minimal absent word itself. In addition, following the proof of Lemma~\ref{lem:bound}, it can not be an infix of a minimal absent word. In the following lemma we provide a necessary and sufficient condition for $aub$ to occur as a prefix of some minimal absent word of $x$.

\begin{lemma}\label{lem:equiv}
Let $aub$, with $a,b \in \Sigma$ and $u$ a word, be a factor occurring only once in a word $x$. The two following statements are equivalent:

\begin{enumerate}
\item\label{cond1}
$aub$ is a prefix of some minimal absent word of $x$;
\item\label{cond2}
$ub$ occurs at least twice in $x$ and if $j_1<j_2<\ldots<j_k$ are the starting positions of its occurrences, with $j_m-1$ being the starting position of the occurrence of $aub$, at least one of the following holds: (i) $x[j_m+k] \neq x[j_i+k]$ for some $i$, $k$ such that $j_m+k \leq n-1$ and $j_i+k \leq n-1$; (ii) $m \neq 1$.
\end{enumerate}
\end{lemma}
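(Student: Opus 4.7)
The plan is to prove the two implications separately. Throughout, write $p = j_m - 1$ for the unique starting position of $aub$ in $x$. A key preliminary observation used in both directions: if $w = aub v'' c$, with $v'' \in \Sigma^*$ and $c \in \Sigma$, is a minimal absent word, then its longest proper prefix $aub v''$ must occur in $x$; by uniqueness of $aub$ this occurrence is at $p$, which forces $v'' = x[j_m + |ub| \dd j_m + |ub| + |v''| - 1]$ and $j_m + |ub| + |v''| \leq n$. Since $aub$ itself occurs in $x$, it is not a minimal absent word, so every minimal absent word with prefix $aub$ takes precisely the form above with a terminal letter $c$.

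For (\ref{cond1})$\Rightarrow$(\ref{cond2}), start from such a minimal absent word $aub v'' c$ and let $j_i$ be a position where its longest proper suffix $ub v'' c$ occurs. I would first show $i \neq m$: the alternative, combined with the preliminary observation, would force $w$ to occur at $p$, contradicting absence. Hence $k \geq 2$, and I split on the position of $j_i$ relative to $j_m$. If $j_i < j_m$ then $m \geq 2$ and (ii) holds. If $j_i > j_m$, then the occurrence of $ub v'' c$ at $j_i$ gives $j_i + |ub| + |v''| \leq n - 1$ and hence also $j_m + |ub| + |v''| \leq n - 1$; absence of $w$ at $p$ then forces $x[j_m + |ub| + |v''|] \neq c = x[j_i + |ub| + |v''|]$, which is exactly (i) with offset $\delta = |ub| + |v''|$.

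For (\ref{cond2})$\Rightarrow$(\ref{cond1}), I would explicitly construct a minimal absent word $w = aub v'' c$, splitting on which clause of (\ref{cond2}) holds. If (i) holds, let $\delta$ be minimal across all admissible pairs from (i); since $ub$ reads identically at all its occurrences, $\delta \geq |ub|$. Set $v'' = x[j_m + |ub| \dd j_m + \delta - 1]$ and $c = x[j_i + \delta]$ for a corresponding $i$. By minimality of $\delta$ the segment $v''$ also matches at $j_i$, so $ub v'' c$ occurs at $j_i$; the defining inequality at offset $\delta$ prevents $w$ from extending at $p$; and $aub v''$ clearly occurs at $p$, so $w$ is a minimal absent word. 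If (i) fails but (ii) holds, set $v'' = x[j_m + |ub| \dd n - 1]$, so $aub v''$ is the suffix of $x$ starting at $p$, and take $c = x[j_1 + n - j_m]$ (a valid index since $j_1 < j_m$). The failure of (i) yields $x[j_m + \delta'] = x[j_1 + \delta']$ for every $\delta'$ with both indices in range, so $v''$ and then $c$ align at $j_1$ into an occurrence of $ub v'' c$; and $w$ is absent at $p$ because any such extension of $aub v''$ would reach position $n$.

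The main obstacle will be this only-(ii) subcase: one must notice that the failure of (i) forces the right-extensions of all occurrences of $ub$ to agree within their intersected valid range, so an earlier occurrence $j_1$ effectively provides one additional letter past the right end of $x$ when viewed from $j_m$. Aligning indices so that $v''$ matches at $j_1$ and $c$ lands at the correct position requires careful bookkeeping but no deeper idea.
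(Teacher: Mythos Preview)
Your proposal is correct and follows essentially the same approach as the paper: both directions proceed by writing the minimal absent word as $aubv''c$, using uniqueness of $aub$ to pin its longest proper prefix at $p$, and then case-splitting on whether the other occurrence of $ub$ lies before or after $j_m$ (for $\Rightarrow$) and on clause (i) versus only-(ii) (for $\Leftarrow$). Your write-up is in fact more explicit than the paper's sketch; in particular, your constructed word in the only-(ii) subcase, namely $x[j_m-1\dd n-1]\,x[j_1+n-j_m]$, is the intended one (the paper's displayed word starts at $j_m$ rather than $j_m-1$, which appears to be a typo).
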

\begin{proof}
($\ref{cond1}. \Rightarrow~\ref{cond2}.$): Consider a word $aub$ occurring just once in $x$ and appearing as a prefix of some minimal absent word. Firstly $aub$ is not itself a minimal absent word, so any minimal absent word that has $aub$ as a prefix must be of the form $aubvd$, where $v$ is a possibly empty word and $d$ a letter. The existence of this minimal absent word means that $ubvd$ occurs in $x$ and it is not preceded by $a$ (so $ub$ occurs at least twice in $x$) and that $aubv$ occurs in $x$ and either: 
\begin{itemize}
\item[($i$)]
it is followed by a letter $c \neq d$, or 
\item[($ii$)]
$aubv$ is a suffix of $x$.
\end{itemize}

($\ref{cond1}. \Leftarrow~\ref{cond2}.$): If (i) holds, then for any minimal such $k$ we have that $x[j_m-1\dd j_m+k-1]x[j_i+k]$ is a minimal absent word.\\
If (i) does not hold, but (ii) holds, then $x[j_m\dd n-1]x[j_1+n-j_m]$ is a minimal absent word.
\end{proof}

Similarly, whether $aub$ is a suffix of some minimal absent word of $x$ depends on the extensions of $\textsf{rev}(u)a$ in $\textsf{rev}(x)$.
\bigskip

\subsection{Computing $q(x)$}

In this section we present Algorithm \Algo{MawToQgrams} that, given a word $x$ of length $n$ over an integer alphabet, computes $q(x)$ in time and space $\mathcal{O}(n)$. The algorithm first creates the suffix trees of $x$ and $\textsf{rev}(x)$ and then preprocesses them in time $\mathcal{O}(n)$. The preprocessing phase for each tree is a depth-first search traversal which allows us to store in each node $v$ a boolean variable $\mathcal{B}(v)$ which indicates if there is any branching in the subtree rooted at $v$ and a variable $\mathcal{S}(v)$ indicating the starting position of the first occurrence of $\mathcal{L}(v)$ in $x$. The latter can be done in time $\mathcal{O}(n)$ since we store the starting position of the suffix corresponding to each terminal node while constructing the suffix tree. Algorithm \Algo{MawToQgrams} then calls Routines \Algo{InfixBound}, \Algo{PrefixBound} and \Algo{SuffixBound} to compute $q(x)$.

\bigskip
\begin{algo}{MawToQgrams}{x}
\SET {\mathcal{T}(x)}{\Call{SuffixTree}{x}}
\SET {\mathcal{T}(\textsf{rev}(x))}{\Call{SuffixTree}{\textsf{rev}(x)}}
\DOFOR{\mbox{each node $v \in \mathcal{T}(x)$}}
\SET{\mathcal{B}(v)}{\mbox{$\texttt{true}$ if there is any branching below $v$ and $\texttt{false}$ otherwise}}
\SET{\mathcal{S}(v)}{\mbox{Starting position of the first occurrence of $\mathcal{L}(v)$ in $x$}}
\OD
\DOFOR{\mbox{each node $v \in \mathcal{T}(\textsf{rev}(x))$}}
\SET{\mathcal{B}(v)}{\mbox{$\texttt{true}$ if there is any branching below $v$ and $\texttt{false}$ otherwise}}
\SET{\mathcal{S}(v)}{\mbox{Starting position of the first occurrence of $\mathcal{L}(v)$ in $x$}}
\OD
\SET{q}{\Call{InfixBound}{x}}
\SET{q}{\Call{PrefixBound}{x, q}}
\SET{q}{\Call{SuffixBound}{x, q}}
\RETURN{q}
\end{algo}

\bigskip
As we have already seen, all the factors of $x$ that occur more than once in $x$ also occur in some minimal absent word of $x$. Hence our aim is to identify a shortest factor of $x$ that is not a factor of any of the minimal absent words of $x$.

We first present Routine \Algo{Test} that, given as inputs $i$ and $j$, tests if the factor $x[i\dd j]$ of $x$ that occurs only once in $x$ also occurs in some minimal absent word of $x$. Let $x[i\dd j]=aub$, where $a,b \in \Sigma$ and $u$ is a word. The routine first checks if $x[i\dd j]$ occurs as a prefix of some minimal absent word of $x$ by checking statement~(\ref{cond2}) of Lemma~\ref{lem:equiv} as follows. It considers the node of $\mathcal{T}(x)$ with path-label $x[i+1\dd j]=ub$; note that in the pseudocode this node is denoted by $\textsc{Node}(\mathcal{T}(x))(i+1, j)$. If this node is explicit, then it is named $v$, while if it is implicit, then the destination of the edge it is on is named $v$. The routine then checks in time $\cO(1)$ if $\mathcal{B}(v)$ is $\texttt{true}$ or if $\mathcal{S}(v) \leq i$. If this is the case, then $aub$ is a factor of some minimal absent word and the test returns $\texttt{true}$.
Otherwise, the analogous check is performed for $\textsf{rev}(x)[n-j\dd n-i-1]=\textsf{rev}(u)a$ in $\mathcal{T}(\textsf{rev}(x))$.
If both checks are unsuccessful, then the routine returns $\texttt{false}$. We discuss how to efficiently obtain the desired nodes later in this section.

\bigskip
\begin{algo}{Test}{$$i$, $j$$}
\SET{v}{\textsc{Node}(\mathcal{T}(x))(i+1, j)}
\IF{\Call{IsImplicit}{v}}
\SET{(v_1,v_2)}{\Call{Edge}{v}}
\SET{v}{v_2}
\FI
\IF{\mbox{$\mathcal{B}(v)= \texttt{true} \ or \ \mathcal{S}(v) \leq i$}}
\RETURN{\texttt{true}}
\FI
\SET{v}{\textsc{Node}(\mathcal{T}(\textsf{rev}(x)))(n-j, n-i-1)}
\IF{\Call{IsImplicit}{v}}
\SET{(v_1,v_2)}{\Call{Edge}{v}}
\SET{v}{v_2}
\FI
\IF{\mbox{$\mathcal{B}(v)=\texttt{true} \ or \ \mathcal{S}(v) \leq n-j-1$}}
\RETURN{\texttt{true}}
\FI
\RETURN{\texttt{false}}
\end{algo}

\bigskip
Now note that the factors of $x$ that occur only once in $x$ are the labels of the leaves and of the implicit nodes on the edges between internal nodes and leaves in the suffix tree. Hence, if node $u$ is a leaf with $\mathcal{L}(u)=x[i\dd n-1]$, then $x[i\dd \mathcal{D}(parent(u))+1]$ corresponds to the shortest unique factor of $x$ occurring at $i$. We can thus find $t(x)$ and all the infixes of $x$ of a given length that occur only once in $x$ in time $\mathcal{O}(n)$. We can also obtain the shortest unique prefix and the shortest unique suffix of $x$ in time $\cO(1)$.

Routine \Algo{InfixBound} first computes all unique infixes of $x$ of length $t(x)$ and tests if there is any of them that does not occur in any minimal absent word of $x$, in which case we have that $q(x) \leq t(x)-1$. If this is not the case, the routine computes all unique infixes of $x$ of length $t(x)+1$ and tests if there is any of them that does not occur in any minimal absent word of $x$, in which case we have that $q(x) \leq t(x)$. Otherwise, we use the bound $q(x) \leq t(x)+1$ shown in Lemma~\ref{lem:bound}, and hence do not have to increment again.

\bigskip
\begin{algo}{InfixBound}{x}
\SET{\mbox{$\ell$}}{t(x)}
\SET{\mathcal{I}}{\mbox{all $(i,j)$ such that $x[i\dd j]$ is a unique infix of length $\ell$}}
\DOFOR{\mbox{each  $(i,j) \in \mathcal{I}$}}
\IF{\Call{Test}{i,j}=\texttt{false}}
	\RETURN{\mbox{$\ell-1$}}
\FI
\OD
\SET{\mathcal{I'}}{\mbox{all $(i,j)$ such that $x[i\dd j]$ is a unique infix of length $\ell+1$}}
\DOFOR{\mbox{each  $(i,j) \in \mathcal{I'}$}}
\IF{\Call{Test}{i,j}=\texttt{false}}
	\RETURN{\mbox{$\ell$}}
\FI
\OD
\RETURN{\mbox{$\ell+1$}}
\end{algo}

\bigskip
Finally, we also perform the same test for the prefixes and suffixes of $x$ that occur only once and their length is smaller than the bound we have at that point. This is done by Routines \Algo{PrefixBound} and \Algo{SuffixBound}. We can then conclude on the value of $q(x)$.

\bigskip
\begin{algo}{PrefixBound}{$$x$, $q$$}
\SET{p}{\mbox{Length of shortest unique prefix of $x$}}
\DOWHILE{p \leq q}
\IF{\Call{Test}{0,p-1}=\texttt{false}}
	\RETURN{p-1}
\FI
\SET{p}{p+1}
\OD
\RETURN{q}
\end{algo}

\bigskip
\begin{algo}{SuffixBound}{$$x$, $q$$}
\SET{s}{\mbox{Length of shortest unique suffix of $x$}}
\DOWHILE{s \leq q}
\IF{\Call{Test}{n-s,n-1}=\texttt{false}}
	\RETURN{s-1}
\FI
\SET{s}{s+1}
\OD
\RETURN{q}
\end{algo}

\bigskip

We now discuss how to answer the queries $\textsc{Node}(\mathcal{T}(x))(i+1, j)$ in line $1$ of \Algo{Test} in time $\cO(n)$ in total. We first discuss how to answer the queries asked within Routine \Algo{InfixBound}. While computing sets $\mathcal{I}$ and $\mathcal{I'}$, alongside the pair $(i,j)$, we also store a pointer to the deepest explicit ancestor $v_{i,j}$ of the node with path-label $x[i \dd j]$. We can do this in time $\cO(n)$ due to how we compute $\mathcal{I}$ and $\mathcal{I'}$. We have that $\mathcal{D}(v_{i,j})=t(x)-1=j-i$ for $(i,j) \in \mathcal{I}$ and $\mathcal{D}(v_{i,j})=t(x)-1=j-i-1$ or $\mathcal{D}(v_{i,j})=t(x)=j-i$ for $(i,j) \in \mathcal{I'}$. Following the suffix-link from the explicit node $v_{i,j}$ we retrieve the node with path-label $x[i+1 \dd j-1]$ or the node with path-label $x[i+1 \dd j-2]$. We then only need to answer at most two child queries for each such node to obtain the node with path-label $x[i+1 \dd j]$. Answering such queries on-line bears the cost of $\cO(\log \sigma)$ per query for integer alphabets or that of non-determinism if we make use of perfect hashing to store the edges at every node of $\mathcal{T}(x)$~\cite{DBLP:journals/jacm/FredmanKS84}. We instead answer these queries off-line: it is well-known that we can answer $q$ child queries off-line during a depth-first traversal of the suffix tree in $\cO(n+q)$ deterministic time by first sorting the queries at each node of $\mathcal{T}(x)$. We first answer one child query per pair $(i,j)$ in a batch and then the potential second ones in another batch. The total time required for this is $\cO(n)$.
Routine \Algo{PrefixBound} only considers nodes with path-labels of the form $x[1 \dd h]$, which can be found by following the edges upwards from the node with path-label $x[1 \dd n-1]$. Routine \Algo{SuffixBound} only considers terminal nodes to which we can afford to store pointers while creating $\mathcal{T}(x)$. We answer the respective queries for $\mathcal{T}(\textsf{rev}(x))$ (line $7$ of \Algo{Test}) in a similar fashion. Finally, having the pointers to the required nodes, we perform all the tests off-line.

Alternatively, we can obtain a deterministic $\cO(n)$-time solution by employing a data structure for a special case of Union-Find~\cite{DBLP:journals/jcss/GabowT85} --- a detailed description of this technique can be found in the appendix of~\cite{zstringsv1}.

\begin{theorem}\label{the:q-grams-maws}
Problem \QGR{} can be solved in time and space $\mathcal{O}(n)$.
\end{theorem}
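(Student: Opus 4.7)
The plan is to verify both correctness and the $\cO(n)$ time and space complexity of the algorithm described above. First I would construct $\mathcal{T}(x)$ and $\mathcal{T}(\textsf{rev}(x))$ in $\cO(n)$ time and space via Farach's construction for integer alphabets, and then perform a single post-order DFS on each tree to record, at every explicit node $v$, the boolean $\mathcal{B}(v)$ (computed bottom-up as the OR of the children's values, with an internal node being branching and a leaf not) and the value $\mathcal{S}(v)$ (the minimum starting position among the leaves below $v$, initialised at terminal nodes during the construction). This preprocessing costs $\cO(n)$ time and space.

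Next I would argue that, given pointers to the two required explicit nodes, each call to \textsc{Test} runs in $\cO(1)$ using Lemma~\ref{lem:equiv}: comparing $\mathcal{S}(v)$ against $i$ checks whether $ub$ has an occurrence in $x$ that is not preceded by $a$, and $\mathcal{B}(v)$ checks whether $ub$ admits a right-extension different from the unique one following its occurrence as a prefix of $aub$. Since the leaves of $\mathcal{T}(x)$ induce at most two unique factors per starting position, the sets $\mathcal{I}$ and $\mathcal{I'}$ of unique infixes of length $t(x)$ and $t(x)+1$ have total size $\cO(n)$ and can be computed, together with $t(x)$, in $\cO(n)$ time from the string-depths of leaves and their parents. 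Routines \textsc{PrefixBound} and \textsc{SuffixBound} each trigger at most $\cO(n)$ tests. Hence there are $\cO(n)$ invocations of \textsc{Test} overall.

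The main obstacle is producing the node $\textsc{Node}(\mathcal{T}(x))(i+1, j)$ (and its counterpart on $\mathcal{T}(\textsf{rev}(x))$) for each query in amortised constant time, since a naive descent with integer-alphabet children costs $\Omega(\log \sigma)$ per step. To circumvent this, while building $\mathcal{I}$ and $\mathcal{I'}$ I would attach to each $(i,j)$ a pointer to the deepest explicit ancestor $v_{i,j}$ of the node labelled $x[i \dd j]$; this is obtained for free while visiting the leaf for $x[i \dd n-1]$. A suffix-link step from $v_{i,j}$ yields (an ancestor of) the node for $x[i+1 \dd j-1]$ or $x[i+1 \dd j-2]$, after which at most two downward child queries reach the target. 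All these child queries are then resolved off-line: collecting them per parent and bucket-sorting on the first edge letter lets them be answered during one DFS of $\mathcal{T}(x)$ in $\cO(n + q) = \cO(n)$ deterministic time, as is standard; alternatively, the special-case Union--Find structure of Gabow and Tarjan mentioned in the excerpt gives the same bound. The symmetric queries on $\mathcal{T}(\textsf{rev}(x))$, the upward walk from the leaf for $x[1 \dd n-1]$ in \textsc{PrefixBound}, and the terminal-leaf pointers cached for \textsc{SuffixBound} are all handled analogously. Summing the costs yields the claimed $\cO(n)$ bound, while correctness of the returned value follows from the fact that any factor occurring twice in $x$ is a factor of some minimal absent word of $x$ (giving $q(x) \geq h(x)-1$), Lemma~\ref{lem:bound} (giving $q(x) \leq t(x)+1$), and Lemma~\ref{lem:equiv} (certifying each test).
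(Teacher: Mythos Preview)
Your proposal is correct and follows essentially the same approach as the paper: build and preprocess $\mathcal{T}(x)$ and $\mathcal{T}(\textsf{rev}(x))$, bound the number of tests via Lemma~\ref{lem:bound}, certify each test with Lemma~\ref{lem:equiv}, and resolve the $\textsc{Node}$ queries in amortised constant time by caching the deepest explicit ancestor, following suffix-links, and batching the at most two residual child queries off-line (or using the Gabow--Tarjan Union--Find alternative). Your write-up is in fact more explicit than the paper's own proof, which largely defers to the preceding algorithm description, but the decomposition and the key ideas coincide.
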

\begin{proof}
We build and preprocess the suffix trees of $x$ and $\textsf{rev}(x)$ in time and space $\mathcal{O}(n)$~\cite{farach1997optimal}. Based on Lemma~\ref{lem:bound} we then have to perform the test for $\mathcal{O}(n)$ factors, which we can find in time $\mathcal{O}(n)$. The tests are performed in total time $\mathcal{O}(n)$ by finding the required nodes and using the preprocessed suffix trees to check statement~(\ref{cond2}) of Lemma~\ref{lem:equiv}. We only need extra space $\mathcal{O}(n)$ to store a representation of the computed factors and perform the tests.
\end{proof}

%%%%%%%%%%%%%%%%%%%%%%%%%%%%%%%%%%%%%%%%%%%%%%%%%%%%%%%%%%%%%%%%%%%%%%%%%%%%%%%%%%%%%%%%%%%%%%%%%%%%%%
\section{Implementation and applications}\label{sec:imp_app}
%%%%%%%%%%%%%%%%%%%%%%%%%%%%%%%%%%%%%%%%%%%%%%%%%%%%%%%%%%%%%%%%%%%%%%%%%%%%%%%%%%%%%%%%%%%%%%%%%%%%%%

We implemented the algorithms presented in Section~\ref{sec:seq_comp} and Section~\ref{sec:circ_seq_comp} as programme \textsf{scMAW} 
to perform pairwise sequence comparison for a set of sequences using minimal absent words.
\textsf{scMAW} uses programme \textsf{MAW}~\cite{MAW} for linear-time and linear-space computation of minimal absent words using suffix array. 
\textsf{scMAW} was implemented in the $\textsf{C}$ programming language and developed under GNU/Linux operating system. 
It takes, as input argument, a file in MultiFASTA format with the input sequences, and then any of the two methods, for {\em linear} or {\em circular} sequence comparison, 
can be applied. It then produces a file in PHYLIP format with the distance matrix as output.
Cell $[x,y]$ of the matrix stores $\LW(x,y)$ (or $\LW(\tilde{x},\tilde{y})$ for the circular case).
The implementation is distributed under the GNU General Public License (GPL), and it is available at \url{http://github.com/solonas13/maw}, 
which is set up for maintaining the source code and the man-page documentation. 
Notice that {\em all} input datasets and the produced outputs referred to in this section are publicly maintained at the same web-site.

An important feature of the proposed algorithms is that they require space linear in the length of the sequences 
(see Theorem~\ref{the:maw} and Theorem~\ref{the:cmaw}). Hence, we were also able to implement \textsf{scMAW}
using the Open Multi-Processing (OpenMP) PI for shared memory multiprocessing programming to distribute the workload 
across the available processing threads without a large memory footprint.

\subsection{Applications}
Recently, there has been a number of studies on the biological significance of absent words in various species~\cite{nullrly,minabpro,Silva02042015}.
In~\cite{minabpro}, the authors presented dendrograms from dinucleotide relative abundances in sets of minimal absent words for prokaryotes and eukaryotic genomes.
The analyses support the hypothesis that minimal absent words are inherited through a common ancestor, in addition to lineage-specific inheritance, 
only in vertebrates. Very recently, in~\cite{Silva02042015}, it was shown that there exist three minimal words in the Ebola virus genomes which are absent from human genome.
The authors suggest that the identification of such species-specific sequences may prove to be useful for the development of both diagnosis and therapeutics.

In this section, we show a potential application of our results for the construction of dendrograms for DNA sequences with circular structure.
Circular DNA sequences can be found in viruses, as plasmids in archaea and bacteria, and in the mitochondria and plastids of eukaryotic cells.
Circular sequence comparison thus finds applications in several contexts such as reconstructing phylogenies using viroids RNA~\cite{conf/gcb/MosigHS06} or Mitochondrial DNA (MtDNA)~\cite{mtDNA-phylo}. 
Conventional tools to align circular sequences could yield an incorrectly high genetic distance between closely-related species. Indeed, when sequencing
molecules, the position where a circular sequence starts can be totally arbitrary. Due to this {\em arbitrariness}, a suitable rotation of one sequence would give much better results 
for a pairwise alignment~\cite{SEA2015,WABI2015}. In what follows, we demonstrate the power of minimal absent words to pave a path to resolve 
this issue by applying Lemma~\ref{lem:stru} and Theorem~\ref{the:cmaw}.
Next we do not claim that a solid phylogenetic analysis is presented but rather an investigation for potential applications of our theoretical findings.

We performed the following experiment with synthetic data. 
First, we simulated a basic dataset of DNA sequences using INDELible~\cite{indelible}.
The number of taxa, denoted by $\alpha$, was set to $12$; 
the length of the sequence generated at the root of the tree, denoted by $\beta$, was set to 2500bp;
and the substitution rate, denoted by $\gamma$, was set to $0.05$. 
We also used the following parameters: a deletion rate, denoted by $\delta$, of $0.06$ \textit{relative} to substitution rate of $1$; 
and an insertion rate, denoted by $\epsilon$, of $0.04$ \textit{relative} to substitution rate of $1$. 
The parameters were chosen based on the genetic diversity standard measures observed for sets of MtDNA sequences from primates and mammals~\cite{SEA2015}. 
We generated another instance of the basic dataset, containing one {\em arbitrary} rotation of each of the $\alpha$ sequences from the basic dataset. 
We then used this randomised dataset as input to \textsf{scMAW} by considering $\LW(\tilde{x},\tilde{y})$ as the distance metric. 
The output of \textsf{scMAW} was passed as input to \textsf{NINJA}~\cite{ninja}, an efficient implementation of
neighbour-joining~\cite{NJ}, a well-established hierarchical clustering algorithm for inferring dendrograms (trees). 
We thus used \textsf{NINJA} to infer the respective tree $T_1$ under the neighbour-joining criterion.
We also inferred the tree $T_2$ by following the same pipeline, but by considering $\LW(x,y)$ as distance metric, 
as well as the tree $T_3$ by using the {\em basic} dataset as input of this pipeline and $\LW(\tilde{x},\tilde{y})$ as distance metric.
Hence, notice that $T_3$ represents the original tree. 
Finally, we computed the pairwise Robinson-Foulds (RF) distance~\cite{RFdistance} between: $T_1$ and $T_3$; and $T_2$ and $T_3$.

Let us define \textit{accuracy} as the difference between 1 and the relative pairwise RF distance.
We repeated this experiment by simulating different datasets $<\alpha,\beta,\gamma,\delta,\epsilon>$ and measured the corresponding accuracy.  
The results in Table~\ref{tab:accuracy} (see $T_1$ vs. $T_3$) suggest that by considering $\LW(\tilde{x},\tilde{y})$ we can always 
re-construct the original tree even if the sequences have first been arbitrarily rotated (Lemma~\ref{lem:stru}). 
This is not the case (see $T_2$ vs. $T_3$) if we consider $\LW(x,y)$. Notice that $100\%$ accuracy denotes a (relative) pairwise RF distance of 0.
\begin{table}[!t]
\begin{center}
\scalebox{0.7}{
\begin{tabular}{l|c|c} \hline
Dataset $<\alpha,\beta,\gamma,\delta,\epsilon>$	& $T_1$ vs. $T_3$ &  $T_2$ vs. $T_3$ \\ \hline
$<12,2500,0.05,0.06,0.04>$		& 100\%		& 100\%\\
$<12,2500,0.20,0.06,0.04>$		& 100\%		& 88,88\%\\
$<12,2500,0.35,0.06,0.04>$		& 100\%		& 100\%\\
$<25,2500,0.05,0.06,0.04>$		& 100\%		& 100\%\\
$<25,2500,0.20,0.06,0.04>$		& 100\%		& 100\%\\
$<25,2500,0.35,0.06,0.04>$		& 100\%		& 100\%\\
$<50,2500,0.05,0.06,0.04>$		& 100\%		& 97,87\%\\
$<50,2500,0.20,0.06,0.04>$		& 100\% 	& 97,87\%\\
$<50,2500,0.35,0.06,0.04>$		& 100\%		& 100\%\\ \hline
\end{tabular}
}
\end{center}
\caption{Accuracy measurements based on relative pairwise RF distance}
\label{tab:accuracy}
\end{table}

%%%%%%%%%%%%%%%%%%%%%%%%%%%%%%%%%%%%%%%%%%%%%%%%%%%%%%%%%%%%%%%%%%%%%%%%%%%%%%%%%%%%%%%%%%%%%%%%%%%%%%
\section{Final remarks}
%%%%%%%%%%%%%%%%%%%%%%%%%%%%%%%%%%%%%%%%%%%%%%%%%%%%%%%%%%%%%%%%%%%%%%%%%%%%%%%%%%%%%%%%%%%%%%%%%%%%%%

In this paper, complementary to measures that refer to 
the composition of sequences in terms of their constituent patterns, we considered sequence comparison using
minimal absent words, information about what does not occur in the sequences.
We presented the first linear-time and linear-space algorithm to compare two sequences by considering {\em all} their minimal absent words.
In the process, we presented some results of combinatorial interest, and also extended the proposed techniques to circular sequences.
The power of minimal absent words is highlighted by the fact that they provide a tool for sequence comparison that is as efficient for circular as it is for linear 
sequences; whereas this is not the case, for instance, using the general edit distance model~\cite{Maes}.
In addition, we presented a linear-time and linear-space algorithm that given a word $x$ computes the largest integer $q(x)$ for which each $q(x)$-gram of $x$ is a $q(x)$-gram of some minimal absent word of $x$.
Finally, a preliminary experimental study shows the potential of our theoretical findings with regards to alignment-free sequence comparison using negative information.

%%%%%%%%%%%%%%%%%%%%%%%%%%%%%%%%%%%%%%%%%%%%%%%%%%%%%%%%%%%%%%%%%%%%%%%%%%%%%%%%%%%%%%%%%%%%%%%%%%%%%%
\section*{Acknowledgements}
%%%%%%%%%%%%%%%%%%%%%%%%%%%%%%%%%%%%%%%%%%%%%%%%%%%%%%%%%%%%%%%%%%%%%%%%%%%%%%%%%%%%%%%%%%%%%%%%%%%%%%

We warmly thank Alice H\'{e}liou (\'{E}cole Polytechnique) for her inestimable code contribution and Antonio Restivo (Universit{\`a} di Palermo) for useful discussions. We also thank the anonymous reviewers for their constructive comments which greatly improved the presentation of the paper. Gabriele Fici's work was supported by the PRIN 2010/2011 project ``Automi e Linguaggi Formali: Aspetti Matematici e Applicativi'' of the Italian Ministry of Education (MIUR) and by the ``National Group for Algebraic and Geometric Structures, and their Applications'' (GNSAGA -- INdAM).
Robert Merca{\c s}'s work was supported by a Newton Fellowship of the Royal Society.
Solon  P.~Pissis's  work was  supported  by  a  Research  Grant (\#RG130720) awarded by the Royal Society.

\bibliographystyle{elsarticle-num}
\bibliography{references}

\end{document}